\def\dtv{d_{\mathsf{TV}}}
\def\Bern{\mathsf{Bern}}
\newcommand{\veps}{\varepsilon}
\newcommand{\eps}{\epsilon}
\newtheorem{theorem}{Theorem}
\newtheorem{lemma}{Lemma}
\DeclareMathOperator{\polylog}{polylog}
\DeclareMathOperator{\E}{\mathbb{E}}
\DeclareMathOperator{\Pe}{\mathsf{P_e}}
\DeclareMathOperator{\Var}{\mathsf{Var}}
\title{Memory Complexity of Estimating Entropy and Mutual Information}
\author{Tomer~Berg, Or~Ordentlich and Ofer~Shayevitz}
\begin{document}

\maketitle

\begin{abstract}
  We observe an infinite sequence of independent identically distributed random variables $X_1,X_2,\ldots$ drawn from an unknown distribution $p$ over $[n]$, and our goal is to estimate the entropy $H(p)=-\E[\log p(X)]$ within an $\veps$-additive error. To that end, at each time point we are allowed to update a finite-state machine with $S$ states, using a possibly randomized but time-invariant rule, where each state of the machine is assigned an entropy estimate. Our goal is to characterize the minimax memory complexity $S^*$ of this problem, which is the minimal number of states for which the estimation task is feasible with probability at least $1-\delta$ asymptotically, uniformly in $p$. Specifically, we show that there exist universal constants $C_1$ and $C_2$ such that $ S^* \leq C_1\cdot\frac{n (\log n)^4}{\veps^2\delta}$ for $\veps$ not too small, and $S^* \geq C_2 \cdot \max \{n, \frac{\log n}{\veps}\}$ for $\veps$ not too large. The upper bound is proved using approximate counting to estimate the logarithm of $p$, and a finite memory bias estimation machine to estimate the expectation operation. The lower bound is proved via a reduction of entropy estimation to uniformity testing. We also apply these results to derive bounds on the memory complexity of mutual information estimation.
\end{abstract}

\section{Introduction}
The problem of inferring properties of an underlying distribution given sample access is called \textit{statistical property estimation}. A typical setup is as follows: given independent samples $X_1, \ldots ,X_n$ from an unknown distribution $p$, the objective is to estimate a property $g(p)$ (e.g., entropy, support size, $L_p$ norm, etc.) under some resource limitation. A prominent example of such a limitation is the amount of available samples, and this limitation gives rise to the notion of sample complexity, namely the minimal number of samples one needs to see in order to estimate $g(p)$ with some given accuracy. Many real–world machine learning and data analysis tasks are limited by insufficient samples, and the challenge of inferring properties of a distribution given a small sample size is encountered in a variety of settings, including text data, customer data, and the study of genetic mutations across a population. The sample complexity of property estimation and, specifically, of entropy estimation, have therefore received much attention in the literature (see Section~\ref{sec:related} for details). 

However, in many contemporary settings, collecting enough samples for accurate estimation is less of a problem, and the bottleneck shifts to the computational resources available for the task and, in particular, the available memory size. In this work, we therefore focus on the problem of estimation under memory constraints, and, in particular, entropy estimation. In order to isolate the effect that finite memory has on the fundamental limits of the problem, we let the number of samples we process be arbitrarily large. 

Formally, the problem is defined as follows. Let $\Delta_n$ be the collection of all distributions over $ [n] $. The Shannon entropy of $p \in \Delta_n$ is $H (p) = -\sum_{x\in [n]}p(x)\log p(x)$. Given independent samples $X_1,X_2,\ldots$ from an unknown $p \in \Delta_n$, we would like to accurately estimate $H(p)$ using limited memory. To that end, an \textit{$S$-state entropy estimator} is a finite-state machine with $S$ states, defined by two functions: The (possibly randomized) memory update function $f:[S] \times [n] \rightarrow [S]$, describing the transition between states as a function of an input sample, and the entropy estimate function $\hat{H}:[S]\rightarrow [0,\log n]$, assigning an entropy estimate to each state. Letting $M_t$ denote the state of the memory at time $t$, this finite-state machine evolves according to the rule: 
\begin{align}
M_0&=s_{\text{init}},\label{eq:init} \\M_t&=f(M_{t-1},X_t)\in [S],\label{eq:evolution}
\end{align}
for some predetermined initial state $s_{\text{init}} \in [S]$. If the machine is stopped at time $t$, it outputs the estimation $\hat{H}(M_t)$. We restrict the discussion to time-invariant memory update function $f$, since storing the time index necessarily incurs a memory cost, and, furthermore, since the number of samples is unbounded, simply storing the code generating a time-varying algorithm may require unbounded memory. We say that an $\eps$-error occurred at time $t$ if our estimate $\hat{H}(M_t)$ is $\veps$-far from the correct entropy. Our figure of merit for the estimator is taken to be its worst-case asymptotic $\eps$-error probability:
\begin{align}
	 \Pe (f,\hat{H},\eps) &= \underset{p\in \Delta_n}{\sup}\limsup_{t\rightarrow\infty} \Pr \left(|\hat{H}(M_t)-H(p)|>\veps \right)\label{eq:pefd}. 
\end{align}
We are interested in the \emph{minimax memory complexity} $S^*(n,\veps,\delta)$, defined as the smallest integer $S$ for which there exist $(f,\hat{H})$ such that $\Pe (f,\hat{H},\eps)\leq \delta$.

Our main result is an upper bound on $S^*(n,\veps,\delta)$, which shows that $\log \frac{n}{\veps^2}+o\left(\log n\right)$ bits suffice for entropy estimation when $\veps>10^{-5}$, thus improving upon the best known upper bounds thus far (\cite{acharya2019estimating,aliakbarpour2022estimation}). While our focus here is on minimizing the memory complexity of the problem in the limit of infinite number of available samples, we further show that the estimation algorithm attaining this memory complexity upper bound only requires $\tilde{O}(n^c)$ samples, for any $c>1$.\footnote{The $\tilde{O}$ suppresses poly-logarithmic terms.} Thus, in entropy estimation one can achieve almost optimal sample complexity and memory complexity, simultaneously. 
Our proposed algorithm approximates the logarithm of $p(x)$, for a given $x\in[n]$, using a \textit{Morris counter}~\cite{morris1978counting}. The inherent structure of the Morris counter is particularly suited for constructing a nearly-unbiased estimator for $\log p(x)$, making it a natural choice for memory efficient entropy estimation. In order to compute the mean of these estimators, $\E[\widehat{\log p(X)}]$, in a memory efficient manner, a finite-memory bias estimation machine (e.g.,~\cite{leighton1986estimating,berg2021deterministic}) is leveraged for simulating the expectation operator. The performance of a scheme based on this high-level idea is analyzed, and yields the following upper bound on the memory complexity:
\begin{theorem}\label{thm:upper_bound}
For any $c>1$, $\beta>0$, $0<\delta<1$ and $\veps=10^{-5}+\beta+\psi_c(n)$, we have
\begin{align}
    S^*(n,\veps,\delta) \leq n\left(\frac{8(c\log n +2)^4}{\beta^2\delta}+4(c\log n +2)^2\right),\label{eq:upb} 
\end{align}
where 
\begin{align}
  \psi_c(n)&=(e+1)n^{-(c-1)+v_n(1)}+\min\{1,C\cdot  n^{-\frac{c-1}{2}+v_n(1/2)}\}+n^{-c}\cdot\frac{ 100(c\log n+2)}{(1-0.5n^{-c})^2} \nonumber\\&=O\left(2^{\sqrt{\log n}}\cdot n^{-\frac{c-1}{2}}\right),
\end{align}
and we set $C= 2(e+1) 10^8$ and $v_n(\alpha)\triangleq\sqrt{\frac{2c\alpha^3}{\log n}}+\frac{\alpha}{\log n}$.\\
Moreover, there is an algorithm that attains~\eqref{eq:upb} when the number of samples is $\Omega \left(\frac{n^c\cdot \mathop{\mathrm{poly}}(\log n)}{\delta}\cdot \mathop{\mathrm{poly}}(\log (1/\delta))\right)$, and returns an estimation of $H(p)$ within an $\veps$-additive error with probability at least $1-3\delta$.
\end{theorem}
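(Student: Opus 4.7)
The plan is to compose a per-round estimator of $-\log p(X)$ with a finite-memory bias-estimation machine that averages the per-round outputs. The outer machine invokes the inner estimator once per round and treats each round's scalar output as a single bounded real-valued observation; the overall state is the triple (symbol being tracked, Morris counter, bias-machine state).

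For the \emph{per-round estimator}, a round proceeds as follows. At its start, store the current symbol $Y \leftarrow X_t$ (contributing a factor of $n$ to the state space) and reset a Morris counter $C \leftarrow 0$, capped at $R \triangleq c\log n + 2$. On every subsequent sample update $C$ by the standard randomized Morris rule (increment with probability $2^{-C}$), and close the round the first time the new sample equals $Y$. The round length $N$ is geometric with parameter $p(Y)$, and the Morris counter satisfies the well-known identity $\E[2^{C}\mid N]=N+1$. Combining this identity with a sharp asymptotic expansion of $\E[C\mid N]$ for large $N$ and with the law of $N$ given $Y$, I expect to derive $\E[C\mid Y] = -\log p(Y) + r(Y)$ with $|\E_Y r(Y)|$ bounded by an explicit constant at most $10^{-5}$---this is the source of the additive $10^{-5}$ in the theorem's $\veps$. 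Capping $C$ at $R$ introduces a further bias that decays polynomially in $n$ (essentially the probability that a geometric of mean $1/p(Y)$ exceeds $n^c$, integrated against $p$), which will be absorbed into $\psi_c(n)$. The capped counter together with the auxiliary ``round-in-progress'' flag uses $O(R^2)$ states.

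For the \emph{averaging stage}, feed the per-round output $C \in \{0,1,\ldots,R\}$ into the finite-memory bias-estimation machine of~\cite{berg2021deterministic}. Since that machine is designed for a Bernoulli source, interface the bounded real-valued observation $C$ to it by converting each $C$ to a $\Bern(C/R)$ variable via an auxiliary randomization on the transition, and rescale the final estimate by $R$. The bias machine then needs $O(R^2/(\beta^2\delta))$ states to achieve additive error $\beta$ on the rescaled mean with failure probability $\delta$. Multiplying the $n$ choices for $Y$, the $O(R^2)$ inner states, and the $O(R^2/(\beta^2\delta))$ outer states yields the upper bound in~\eqref{eq:upb}, and a triangle inequality over the three error sources---Morris bias ($\le 10^{-5}$), truncation bias ($\psi_c(n)$) and bias-machine error ($\beta$)---recovers $\veps = 10^{-5} + \beta + \psi_c(n)$. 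For the sample complexity, each round has expected length $\E[1/p(Y)] = n$ but can be long when $Y$ is rare; capping the round at $n^c$ samples trims the tail at a cost already absorbed into $\psi_c(n)$, and the bias machine converges in $\tilde{O}(1/(\beta^2\delta))$ rounds, giving the stated $\tilde{O}(n^c/\delta)$ total sample count.

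The main obstacle will be the per-round analysis: extracting the explicit $10^{-5}$ constant requires sharper control of the Morris-counter bias than the standard $O(1)$ analysis gives, based on an explicit asymptotic expansion of $\E[C\mid N]$ combined with the specific form of the conditional distribution of $N$ given $Y$, and cross-cancellation of the Euler--Mascheroni-type constants arising from $\E_T[\log N]$ on a geometric $N$ against the limiting bias of the Morris counter. The remaining ingredients---the identity $\E[2^{C}\mid N]=N+1$, the bias-estimation machine of~\cite{berg2021deterministic}, the Bernoulli dequantization, and a three-way union bound---are essentially off-the-shelf, and the main novelty lies in packaging them into a single time-invariant machine whose size is $\tilde{O}(n)$ states.
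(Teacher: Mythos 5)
Your averaging stage (Bernoulli dequantization of a bounded statistic into the finite-memory bias-estimation machine, Chebyshev with $O(M^2/(\beta^2\delta))$ states, and a finite-sample mixing argument) matches the paper, but your per-round estimator is genuinely different from the paper's and it does not work. You estimate $-\log p(Y)$ by the (Morris-tracked) logarithm of the waiting time $N$ until the stored symbol $Y$ recurs, so $N\mid Y=y$ is geometric with parameter $p_y$. The gap $\E[\log N\mid Y=y]-\log(1/p_y)$ equals $-\gamma/\ln 2$ only in the limit $p_y\to 0$; for symbols of non-vanishing mass the discrepancy is of constant order. For instance, at $p_y=1/2$ one has $\E[\log_2 N]\approx 0.73$ whereas $\log_2(1/p_y)-\gamma/\ln 2\approx 0.17$, so after subtracting the only universal constant available the per-symbol error is about $0.57$; since the correct correction depends on the unknown $p_y$, there is nothing known you can subtract, and the aggregate bias $\E_Y[r(Y)]$ is of order $\sum_y p_y^2$ --- e.g.\ close to $0.3$ bits for a distribution with an atom of mass $1/2$ --- which cannot be absorbed into $10^{-5}+\beta+\psi_c(n)$. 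Your hoped-for ``cross-cancellation of Euler--Mascheroni constants'' is thus only an asymptotic statement for rare symbols; the $10^{-5}$ in the theorem has nothing to do with it, being solely the amplitude of the periodic term $g$ in Flajolet's expansion (Theorem~\ref{thm:morris}), which is incurred \emph{on top of} whatever bias the per-round statistic has. (This waiting-time idea is essentially the estimator of Aliakbarpour et al.\ with $t=1$ occurrences, which is exactly why they take $t$ large and add bias-correction terms.)

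The paper's construction is designed to avoid this. It uses a window whose length $N$ is governed by a separate Morris-counter ``clock'' with $M\approx c\log n+2$ states, so the law of $N$ is independent of $p$ and the centering constant $\E\log N$ is computable offline; a second truncated Morris counter tracks occurrences $N_X$ of the stored symbol inside the window, and the estimator is effectively $\log(N_X/N)$. Given $N$ and $X=x$ one has $N_X\sim\mathrm{Bin}(N-1,p_x)+1$, and two Jensen arguments sandwich $\E\bigl[\log\tfrac{N_X}{N p_X}\bigr]$ between $0$ and $(e+1)n^{-(c-1)+v_n(1)}$ uniformly over $p$ (Lemma~\ref{lem:smallratio}), while the Morris error terms and the truncation at $2M$ states are bounded separately (Lemmas~\ref{lem:gammaB} and~\ref{lem:Mor_bias}); this is what makes the total bias $10^{-5}+\psi_c(n)$ uniformly in $p$, at the price of the $n\cdot M\cdot 2M\cdot S_{\text{bias}}$ state count. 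Unless you replace the first-recurrence statistic by such a window/ratio statistic (or exhibit an implementable, distribution-independent bias correction), your scheme cannot meet the stated error guarantee. Two smaller points: a single capped counter plus a round flag needs only $O(\log n)$ inner states, not $O(\log^2 n)$, so your state bookkeeping is inconsistent with your own construction; and the ``$\tilde O(1/(\beta^2\delta))$ rounds suffice'' claim needs an argument such as the paper's coupling/coupon-collector bound showing the bias machine mixes in $\Theta(S\log S)$ increments, uniformly in the Bernoulli parameter.
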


Note that the additive term $\psi_c(n)$ only becomes negligible when $10^8$ is much smaller than $n^{-\frac{c-1}{2}}$, thus the regime in which our results are significant is the asymptotic regime. Furthermore, while $\psi_c(n)$ vanishes for large $n$, our bound is always limited to $\veps>10^{-5}$. This small bias is due to inherent properties of the Morris counter, on which we elaborate in Section~\ref{sec:pre}. As we are more interested in the case where the entropy grows with the alphabet size $n$, the limitation of the attainable additive error to  values above $10^{-5}$ is typically a very moderate one for the sizes of $n$ we consider. While attaining good sample complexity is not the main focus of our work, we also note that if $n$ is large and $\veps$ not too small, one can choose $c$ arbitrarily close to $1$, resulting in an algorithm whose sample complexity has similar dependence on $n$ as those of the limited-memory entropy estimation algorithms proposed in~\cite{acharya2019estimating} and~\cite{aliakbarpour2022estimation}, while requiring less memory states. This result might be of practical interest for applications in which memory is a scarcer resource than samples, e.g., a limited memory high-speed router that leverages entropy estimation to monitor IP network traffic~\cite{chakrabarti2006estimating}. Finally, we note that initial simulation results are more optimistic than the prediction of the theorem. Specifically, for a uniformly distributed input over $n=1000$  with parameters $c=1.5,\beta=0.1,\delta=0.1$, the sample complexity of the algorithm for these parameters, as prescribed by lemma~\ref{lem:sample_comp} is $L\approx 4\cdot 10^{14}$. However, by running the entropy machine for $t=10^{11}$ samples, we were able to obtain an additive error of about $ 0.15$, much smaller than the infinite sample additive error predicted by Theorem~\ref{thm:upper_bound}, which is approximately
$\beta +\min \{1,C\cdot n^{-1/2}\}+n^{-c}\cdot 100(c\log n+2)\approx 1.18$. It seems that $\psi_c(n)$, while negligible for large $n$, is still marginally loose.

Furthermore, we derive two lower bounds on the memory complexity. The first lower bound shows that when $H(p)$ is close to $\log{n}$, the memory complexity cannot be too small. This bound is obtained via a reduction of entropy estimation to uniformity testing, by noting that thresholding the output of a good entropy estimation machine around $\log{n}$ can be used to decide whether $p$ is close to the uniform distribution or not. The bound then follows from the $\Omega(n)$ lower bound of~\cite{berg2022memory}  on uniformity testing. The second lower bound follows from the observation that, if the number of states is too small, there must be some value of the entropy at distance greater than $\veps$ from all estimate value hence, for this value of the entropy, our entropy estimator will be $\veps$-far from the real value with probability $1$. Combining these lower bounds yields the following.
\begin{theorem}\label{thm:lower_bound}
For any $\veps>0$, we have
	\begin{align}
	S^*(n,\veps,\delta) \geq  \frac{
 \log n}{2\veps} . 
	\end{align}
 Furthermore, if $\veps<\frac{1}{4\ln 2}$, then
 \begin{align}
	S^*(n,\veps,\delta) \geq   n(1-2\sqrt{\veps\ln 2}). 
\end{align}
\end{theorem}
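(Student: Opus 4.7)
The first bound rests on a simple covering argument. Any $S$-state machine produces at most $S$ distinct entropy estimates, so the set $U\triangleq\bigcup_{s\in[S]}[\hat{H}(s)-\veps,\hat{H}(s)+\veps]$ of values $\veps$-close to some estimate has Lebesgue measure at most $2S\veps$. Since $\{H(p):p\in\Delta_n\}=[0,\log n]$, whenever $2S\veps<\log n$ one can choose $p^*\in\Delta_n$ with $H(p^*)\notin U$, and then $|\hat{H}(M_t)-H(p^*)|>\veps$ deterministically for every $t$, forcing $\Pe(f,\hat{H},\veps)=1$ and violating the error constraint for every $\delta<1$. Thus $S^*\geq\log n/(2\veps)$.

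For the second bound, the plan is to reduce $(\veps,\delta)$-entropy estimation to uniformity testing at TV parameter $\veps'\triangleq\sqrt{\veps\ln 2}$, and then invoke the $\Omega(n)$ memory lower bound of \cite{berg2022memory}. Given any $S$-state entropy estimator $(f,\hat{H})$, I construct a tester that runs the same machine and declares $p=U_n$ iff $\hat{H}(M_t)\geq\log n-\veps$. Correctness follows from Pinsker's inequality combined with the identity $D(p\|U_n)=\log n-H(p)$ (in bits): after converting to nats, one obtains $\dtv(p,U_n)\leq\sqrt{(\ln 2)(\log n-H(p))/2}$. Hence if $\dtv(p,U_n)\geq\veps'$ then $\log n-H(p)\geq 2\veps$, so with probability at least $1-\delta$ one has $\hat{H}(M_t)\leq H(p)+\veps\leq\log n-\veps$ and the tester declares \emph{non-uniform}; conversely when $p=U_n$ we have $H(p)=\log n$, and with probability at least $1-\delta$ the estimator satisfies $\hat{H}(M_t)\geq\log n-\veps$, so the tester declares \emph{uniform}. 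In both cases the error is at most $\delta$, and the tester uses exactly the same $S$ states as the original estimator.

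Invoking the memory lower bound of \cite{berg2022memory} for uniformity testing, which in the relevant regime takes the form $S\geq n(1-2\veps')$, and substituting $\veps'=\sqrt{\veps\ln 2}$, one obtains $S^*\geq n(1-2\sqrt{\veps\ln 2})$. The hypothesis $\veps<\tfrac{1}{4\ln 2}$ is precisely what guarantees $\veps'<1/2$, so that the bound is non-vacuous. The main obstacle I anticipate is purely bookkeeping: verifying that the cited uniformity-testing lower bound is stated with the explicit constant $n(1-2\veps')$ (and not merely as $\Omega(n)$ with suppressed dependence on $\veps'$), so that the final constant $2\sqrt{\veps\ln 2}$ can be extracted cleanly. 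The covering argument and the Pinsker-based reduction are otherwise elementary.
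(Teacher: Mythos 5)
Your proof is correct and follows essentially the same route as the paper: the first bound is the paper's quantization/pigeonhole argument (merely rephrased as a measure-covering statement), and the second is the identical reduction to uniformity testing via Pinsker's inequality and $D(p\|u)=\log n-H(p)$, thresholding $\hat{H}$ at $\log n-\veps$ and invoking the lower bound of the cited uniformity-testing work with parameter $\sqrt{\veps\ln 2}$. The only caveat you raise, extracting the explicit constant $n(1-2\tilde{\veps})$ from that reference, is exactly how the paper uses it as well.
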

One of several open problems posed by the authors of~\cite{acharya2019estimating} is to prove a lower bound on the space requirement of a sample optimal algorithm for entropy estimation. Theorem~\ref{thm:lower_bound} answers this question by giving a lower bound on the memory size needed when the number of samples is infinite, which clearly also holds for any finite number of samples.
In the concluding section of the paper, we extend our results to the mutual information estimation problem. Let $(X,Y)\sim p_{XY}$, where $p_{XY}$ is an unknown distribution over $ [n]\times [m]$ such that the marginal distribution of $X$ is $p_X$ and the marginal distribution of $Y$ is $p_Y$. The mutual information between $X$ and $Y$ is given as $I(X;Y)=H(X)+H(Y)-H(X,Y)$. We derive the following bounds on the memory complexity of mutual information estimation, namely the minimal number of states needed to estimate $I(X;Y)$ with additive error at most $\veps$ with probability of at least $1-\delta$, which we denote as $S_{\text{MI}}^*(n,m,\veps,\delta)$.
\begin{theorem}\label{thm:mutual} For any $c>1$, $\beta>0$ and $\veps=3\cdot 10^{-5}+\beta+O\left(\min \left\{2^{\sqrt{\log n}}\cdot n^{-\frac{1}{2}\cdot (c-1)}, 2^{\sqrt{\log m}}\cdot m^{-\frac{1}{2}\cdot (c-1)}\right\}\right)$,
\begin{align}
	S_{\text{MI}}^*(n,m,\veps,\delta) \leq nm\left(\frac{288\cdot (c\log nm+2)^6}{\beta^2\delta}+16(c\log nm+2)^4\right)
\end{align}
For $\veps<\frac{1}{12\ln 2}$, and if $\frac{n}{\log^3 n}=\Omega(\log^7 m)$ and $\frac{m}{\log^3 m}=\Omega(\log^7 n)$ both hold, then
\begin{align}
	 S_{\text{MI}}^*(n,m,\veps,\delta) =\Omega\left(\frac{n\cdot m}{\log^3 n\cdot \log^3 m}\right).
\end{align}
\end{theorem}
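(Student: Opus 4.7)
My plan is to exploit the decomposition $I(X;Y)=H(X)+H(Y)-H(X,Y)$ and instantiate the entropy estimator of Theorem~\ref{thm:upper_bound} three times: on the $X$-marginal (alphabet $[n]$), on the $Y$-marginal (alphabet $[m]$), and on the joint alphabet $[n]\times[m]$ of size $nm$. Each instance consumes the appropriate projection of the incoming sample $(X_t,Y_t)$ at every time step and reports an entropy estimate, which I combine via $\hat{I}=\hat{H}_X+\hat{H}_Y-\hat{H}_{X,Y}$. Calibrating each sub-estimator to accuracy $\beta/3$ and confidence $1-\delta/3$, the triangle inequality and a union bound give overall additive error $\beta$ with probability at least $1-\delta$; the $10^{-5}$ inherent Morris-counter bias accumulates to $3\cdot 10^{-5}$, and the three $\psi_c$ contributions combine into the $O(\min\{\cdot\})$ expression stated in the theorem.

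\textbf{Lower bound.} My plan is to reduce uniformity testing on the joint alphabet $[n]\times[m]$ (of size $N=nm$) to mutual information estimation, mirroring the strategy used for Theorem~\ref{thm:lower_bound}. I contrast two distributions: $p^{(0)}$, the uniform distribution on $[n]\times[m]$, for which $I(X;Y)=0$; and (assuming $m\mid n$ for simplicity) $p^{(1)}(x,y)=\frac{1}{n}\ind\{y\equiv x\bmod m\}$, which also has uniform marginals but is deterministic in $Y\mid X$, so $I(X;Y)=\log m$ while $\dtv(p^{(1)},p^{(0)})=1-1/m=\Omega(1)$. For $\veps<\frac{1}{12\ln 2}$ and $m$ sufficiently large, any $\veps$-accurate mutual information estimator cleanly separates $I=0$ from $I=\log m\gg\veps$, and hence doubles as a uniformity tester on an alphabet of size $nm$. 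Invoking the $\Omega(N/\log^3 N)$ state lower bound from~\cite{berg2022memory} yields $S_{\text{MI}}^*(n,m,\veps,\delta)=\Omega(nm/\log^3(nm))$; under the stated technical conditions $n/\log^3 n=\Omega(\log^7 m)$ and $m/\log^3 m=\Omega(\log^7 n)$ the simplification $\log^3(nm)=O(\log^3 n\cdot\log^3 m)$ is valid in the relevant regime, yielding the claimed form.

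\textbf{Main obstacle.} The chief subtlety lies on the upper bound side: a naive product of the three sub-FSMs would yield a state count of order $(nm)^{O(1)}\cdot\polylog(nm)/(\beta^6\delta^3)$, far larger than the target $nm\cdot\polylog(nm)/(\beta^2\delta)$. To match the stated bound I will have to revisit the internal structure of the estimator of Theorem~\ref{thm:upper_bound} and orchestrate the three sub-estimators so that they share a common Morris-counter resource over the joint alphabet $[n]\times[m]$, with three dedicated bias-estimation machines tracking $-\log p(X_t)$, $-\log p(Y_t)$, and $-\log p(X_t,Y_t)$ respectively from the same counter bank, rather than maintaining three independent counter banks. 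The extra $(c\log nm+2)^2$ polylogarithmic factor in the target bound (relative to the single-entropy bound of Theorem~\ref{thm:upper_bound}) is exactly the budget that absorbs the cost of the two auxiliary marginal bias machines running alongside the joint one.
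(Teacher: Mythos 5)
Your architecture does not reach the stated state count, and the obstacle you flag is not resolved by your proposed fix. Each bias-estimation machine contributes a multiplicative factor of $\Theta\bigl(M^2/(\beta^2\delta)\bigr)$ states (with $M=O(\log nm)$), so keeping three dedicated bias machines — even with a shared Morris-counter bank — cubes the $1/(\beta^2\delta)$ dependence, giving $nm\cdot\polylog(nm)/(\beta^6\delta^3)$ rather than $nm\cdot\polylog(nm)/(\beta^2\delta)$. The extra $(c\log nm+2)^2$ in the theorem relative to Theorem~\ref{thm:upper_bound} is only the cost of two additional truncated Morris counters of size $2M$ each; it cannot absorb two additional bias machines. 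The paper's algorithm never estimates the three entropies separately: inside each sampling window it runs four Morris counters ($C_N$ as a clock and $C_{N_x},C_{N_y},C_{N_{xy}}$), forms the single statistic $C_{\text{MI}}=C_{N_x}+C_{N_y}-C_{N_{xy}}$, which is an almost-unbiased estimate of $-\log(p_xp_y/p_{xy})$, rescales it to a Bernoulli parameter $\theta_{N_{xy}}=a-\overline{C}_{\text{MI}}/(6M)$, and feeds exactly one bias-estimation machine of size $\lceil 36M^2/(\beta^2\delta)\rceil+1$; the count $nm\cdot M\cdot(2M)^3\cdot\bigl(36M^2/(\beta^2\delta)+2\bigr)$ then matches the theorem. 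In short, the linear combination $H(X)+H(Y)-H(X,Y)$ must be performed at the level of the per-window log-estimates, before the (single) expectation-simulating machine, not by combining three finished entropy estimates.

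\textbf{Lower bound.} Your reduction to uniformity testing is not valid. A uniformity tester must reject \emph{every} distribution that is $\tilde{\veps}$-far from uniform in total variation, but thresholding $\hat{I}$ fails, e.g., on a product of non-uniform marginals, which can be far from uniform while $I(X;Y)=0$. What your construction shows is only that an accurate MI estimator separates two specific distributions (uniform versus your $p^{(1)}$), i.e., it solves a simple binary hypothesis test; the memory lower bound of~\cite{berg2022memory} concerns the composite uniformity-testing problem and does not transfer to a simple-vs-simple test (which generally needs far fewer states), and it is not of the form $\Omega(N/\log^3 N)$ in any case — the $\log^3 n\log^3 m$ loss in the theorem comes from elsewhere. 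The paper argues in the opposite direction: it reduces \emph{joint entropy estimation} to MI estimation. Using the black-box MI machine plus the Morris-based marginal entropy estimators run in three sequential modes — with a mode register, Morris clocks of $O(\log\log n)$ and $O(\log\log m)$ states, and $O(\log^2 n)$, $O(\log^2 m)$ states to store $\hat{H}(X)$ and $\hat{H}(Y)$ — it outputs $\hat{H}(X)+\hat{H}(Y)-\hat{I}(X;Y)$ as a $(3\veps,3\delta)$ estimate of $H(X,Y)$, at a multiplicative overhead of $O(\log^3 n\cdot\log^3 m)$ states over $\tilde{S}=\max\{S^*(n,\veps,\delta),S^*(m,\veps,\delta),S_{\text{MI}}^*(n,m,\veps,\delta)\}$. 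Since joint entropy estimation over $[n]\times[m]$ requires $\Omega(nm)$ states by Theorem~\ref{thm:lower_bound}, one gets $\tilde{S}=\Omega\bigl(nm/(\log^3 n\log^3 m)\bigr)$, and the assumptions $\frac{n}{\log^3 n}=\Omega(\log^7 m)$ and $\frac{m}{\log^3 m}=\Omega(\log^7 n)$ are exactly what guarantees the maximum is attained by $S_{\text{MI}}^*$ rather than by the marginal estimators (whose size is $O(n\log^4 n)$, $O(m\log^4 m)$). Your proposal never uses these hypotheses, which signals that the intended argument is different.
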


\section{Related work}\label{subsec:related_work}\label{sec:related}

The study of estimation under memory constraints has received far less attention then the sample complexity of statistical estimation. References~\cite{cover1969hypothesis},~\cite{hellman1970learning} studied this setting for hypothesis testing with finite memory, and~\cite{samaniego1973estimating},~\cite{leighton1986estimating} have studied estimating the bias of a coin using a finite state machine. It has then been largely abandoned, but recently there has been a revived
interest in space-sample trade-offs in statistical estimation, and many works have addressed different aspects of the learning under memory constraints problem over the last few years. See, e.g.,~\cite{sd15,svw16,raz18,ds18,jain2018effective,dks19,ssv19,garg2021memory,pensia2022communication} for a non exhaustive list of recent works.

The problem of estimating the entropy with limited independent samples from the distribution has a long history. It was originally addressed by~\cite{basharin1959statistical}, who suggested the simple and natural empirical plug-in estimator. This estimator outputs the entropy of the empirical distribution of the samples, and its sample complexity ~\cite{antos2001convergence} is $\Theta\left(\frac{n}{\veps }+\frac{\log ^2n}{\veps^2}\right)$.~\cite{antos2001convergence} showed that the plug-in estimator is always consistent, and the resulting sample complexity was shown to be linear in $n$. In the last two decades, many efforts were made to improve the bounds on the sample complexity. Paninski~\cite{paninski2003estimation,paninski2004estimating} was the first to prove that it is possible to consistently estimate
the entropy using sublinear sample size. While the scaling of the minimal sample size of consistent estimation was shown to be $\frac{n}{\log n}$ in the seminal results of~\cite{valiant2010clt,valiant2011estimating}, the optimal dependence of the sample size on both $n$ and $\veps$ was not completely resolved until recently. In particular, $\Omega\left(\frac{n}{\veps\log n}\right)$ samples were shown to be necessary, and the best upper bound on the sample complexity was  relied on an estimator based on linear programming that can achieve an additive error $\veps$ using $O\left(\frac{n}{\veps^2\log n}\right)$ samples~\cite{valiant2013estimating}. This gap was partially amended in~\cite{valiant2011power} by a different estimator, which requires $O\left(\frac{n}{\veps\log n}\right)$ samples but is only valid when $\veps$ is not too small. The sharp sample complexity was shown by~\cite{jiao2015minimax,wu2016minimax} to indeed be   
\begin{align}
  \Theta\left(\frac{n}{\veps \log n}+\frac{\log ^2n}{\veps^2}\right).  
\end{align}

The space-complexity (which is the minimal memory in bits needed for the algorithm) of estimating the entropy of the empirical distribution of the data
stream is well-studied for worst-case data streams of a given length, see~\cite{lall2006data},~\cite{chakrabarti2006estimating},~\cite{guha2009sublinear}. Reference~\cite{chien2010space} addressed the problem of deciding if the entropy
of a distribution is above or beyond than some predefined threshold, using algorithms with limited memory. The trade-off between sample complexity and space/communication complexity for the entropy estimation of a distribution is the subject of a more recent line of work. The earliest work on the subject is~\cite{acharya2019estimating}, 
where the authors constructed an algorithm which is guaranteed to work with $O(n/\veps^3 \cdot \polylog (1/\veps))$ samples and any memory size $b\geq 20 \log \left(\frac{n}{\veps}\right)$ bits (which corresponds to $O(n^{20}/\veps^{20})$ memory states in our setup). Their upper bound on the sample complexity was later improved by~\cite{aliakbarpour2022estimation} to $O(n/\veps^2 \cdot \polylog (1/\veps))$ with space complexity of $O\left(\log \left(\frac{n}{\veps}\right)\right)$ bits. We note that in both works above the constant in the space complexity upper bound can be reduced from $20$ to $5$ by a careful analysis. The work of~\cite{acharya2019estimating} is based on an empirical estimator. Consider the following 
simple approach: we draw $N$ samples from the distribution to approximate $\hat{p}_x$ and then take the average of $\log(1/\hat{p}_x)$ over $R$ iterations. This approach gives the desired memory bound, but uses too many samples. To improve the sample complexity, the authors suggest to partition $[0,1]$ into $T$
disjoint intervals, and perform the simple approach above separately for probabilities within each interval. The essence of the algorithm is that when $p_X$ is large inside the interval, fewer samples are needed to estimate $p_X$ (small $N$), and if $p_X$ is small inside the interval, fewer iterations are needed (small $R$). The algorithm of~\cite{aliakbarpour2022estimation} is based on the observation that $Y$, the number of additional draws needed to see $x$ exactly $t$ more times (where $t$ is an algorithm parameter) is a negative binomial random variable, $Y\sim\mathsf{NB}(t,p_x)$. As $\E(Y)=t/p_x$, taking $\log(Y/t)$ as the estimate of $\log(1/p_x)$, and adding a few more samples to correct the bias, allows the authors to improve the sample complexity by a factor of $1/\veps$. Both of the approaches above can be referred to as “natural” approaches, i.e., realizing algorithms that work well for the unconstrained setup in a memory limited framework. It has been pointed out recently in~\cite{berg2024statistical} that this approach for obtaining upper bounds in a memory limited scheme can be strictly suboptimal in the large sample regime. As an example, consider the natural statistic for estimating the parameter of a coin, which is counting the number of $1$’s in a stream. This results in a quadratic risk of $O(1/\sqrt{S})$, as in order to count the number of $1$’s in a stream of length $k$ we must keep a clock that counts
to $k$, thus overall the number of states used is $S = O(k^2)$. However, it is known from the works of~\cite{leighton1986estimating,berg2021deterministic} that the best achievable quadratic risk is $O(1/S)$, and it can be achieved by randomized or deterministic constructions. Unlike the works of~\cite{acharya2019estimating,aliakbarpour2022estimation}, where the authors try to estimate $p_x$ by drawing some related r.v. (Binomial or Negative-Binomial) and then taking its normalized logarithm, the algorithm we purpose directly estimates $\log p_x$ by leveraging properties of Morris counters. This allows our algorithm to save memory, yet it comes at a price of an added periodic term, inherent to Morris counters, that can be only bounded by $10^{-5}$. Whether this term is a real bottleneck or an artifact of our algorithm is a topic for further research. 


\section{Preliminaries}\label{sec:pre}
In this section, we introduce mathematical notations and some relevant background for the paper.
\subsection{Notation}
We write $[n]$ to denote the set $\{1,\ldots,n\}$, and consider discrete distributions over $[n]$. We use the notation $p_i$ to denote the probability of element $i$ in distribution $p$. When $X$ is a random variable on $[n]$, $p_X$ denotes the random variable obtained by evaluating $p$ in location $X$. The entropy of $p$ is defined as $H (p) = -\sum_{x\in [n]}p_x\log p_x=\E_{X\sim p} (-\log p_X)$, where $ H (p)=0$ for a single mass distribution and $ H (p)=\log n$ a uniform distribution over $[n]$. The total variation distance between distributions $p$ and $q$ is defined as half their $\ell^1$ distance, i.e., $\dtv(p,q)=\frac{1}{2}||p-q||_1=\frac{1}{2}\sum_{i=1}^n |p_i-q_i|$, and their KL (Kullback–Leibler) divergence is defined as $D_{\text{KL}}(p||q)=\sum_{i=1}^n p_i\log \frac{p_i}{q_i}$. Logarithms are taken to base 2.

\subsection{Morris Counter}
Suppose one wishes to implement a counter that counts up to $m$. Maintaining this counter exactly can be accomplished using $\log m$ bits. In the first example of a non-trivial streaming algorithm, Morris gave a randomized “approximate counter”, which allows one to retrieve a constant multiplicative approximation to $m$ with high probability using  only $O(\log \log m)$ bits (see~\cite{morris1978counting}). The Morris Counter was later analyzed in more detail by Flajolet~\cite{flajolet1985approximate}, who showed that $O(\log \log m + \log(1/\veps) + \log(1/\delta))$ bits of memory are sufficient to return a $(1 \pm \veps)$ approximation with success probability $1-\delta$. A recent result of~\cite{nelson2020optimal} shows that $O(\log \log m + \log(1/\veps) + \log\log(1/\delta))$ bits suffice for the same task. 

The original Morris counter is a random state machine with the following simple structure (desribed in the algorithm below): At each state $s = 1,2,3,\ldots,$ an increment causes the counter to transition to state $s+1$ with
probability $2^{-s}$, and to remain in state $s$ with probability $1-2^{-s}$. 
\begin{algorithm}
\caption{IncrementMorrisCounter}\label{alg:cap}
\begin{algorithmic}[1]
\Require  {Previous memory state} $s$
\Ensure {Next memory state} $s$
\State $B \gets \text{Draw from a }\mathsf{Bern}(2^{-s}) \text{ distribution}$
\State $s \gets s+ B$
\end{algorithmic}
\end{algorithm}\\
This is formally the discrete time pure birth process of Figure~\ref{fig:morris}:

\begin{center}
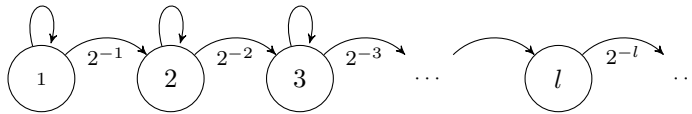
\begin{figure}[H]
\centering
\setlength\belowcaptionskip{-1.4\baselineskip}
\begin{tikzpicture}
  \tikzset{
    >=stealth',
    node distance=1.7cm,
    state/.style={font=\scriptsize,circle, align=center,draw,minimum size=25pt},
    dots/.style={state,draw=none}, edge/.style={->},
  }
  \node [state] (1) {$1$};
  \node [state ,label=center:$2$] (2)   [right of = 1] {};
  \node [state,label=center:$3$] (3) [right of = 2]   {};
  \node [dots]  (dots)  [right of = 3] {$\cdots$};
  \node [state,label=center:$l$] (l) [right of = dots]  {};
  \node [dots]  (dots2)  [right of = l] {$\cdots$};
  \path [->,draw,thin,font=\footnotesize]  (1) edge[loop above] node[below] {} (1);
  \path [->,draw,thin,font=\footnotesize]  (1) edge[bend left=45] node[below] {$2^{-1}$} (2);
  \path [->,draw,thin,font=\footnotesize]  (2) edge[loop above] node[below] {} (2);
  \path [->,draw,thin,font=\footnotesize]  (2) edge[bend left=45] node[below] {$2^{-2}$} (3);
   \path [->,draw,thin,font=\footnotesize]  (3) edge[loop above] node[below]  {} (3);
  \path [->,draw,thin,font=\footnotesize]  (3) edge[bend left=45] node[below] {$2^{-3}$} (dots);
   \path [->,draw,thin,font=\footnotesize]  (dots) edge[bend left=45] node[below] {} (l); 
  >,draw,thin,font=\footnotesize]  (l) edge[loop above] {} (l);
  \path [->,draw,thin,font=\footnotesize]  (l) edge[bend left=45] node[below] {$2^{-l}$} (dots2);
\end{tikzpicture}
\caption{The original Morris counter}\label{fig:morris}
\end{figure}
\end{center}
The performance of the above counter was characterized by Flajolet, who proved the following theorem.
\begin{theorem}[\cite{flajolet1985approximate}]~\label{thm:morris}
Let $C_m$ be the value of the Morris counter after $m$ increments. It holds that
\begin{align}
    \E (C_m)=\log m +\mu+g(\log m)+\phi(m),
\end{align}
 where 
 \begin{align}
    \mu = \frac{\gamma}{\ln 2}+\frac{1}{2}-\sum_{i=1}^{\infty}\frac{1}{2^i-1} \hspace{1mm} \text{and }\gamma=\lim _{n \rightarrow \infty}\left(-\log n+\sum_{k=1}^n \frac{1}{k}\right)\text{is Euler's constant},
\end{align}
$g(\cdot)$ is a periodic function of amplitude less than $10^{-5}$, $ |\phi(m)|\leq\min \left\{1,\frac{2^{\sqrt{16\log m}}\cdot (\log m)^{4.5}}{2m}\right\}$ and the expectation is over the randomness of the counter.\footnote{In~\cite{flajolet1985approximate}, Flajolet bounded $\phi(m)$ with $O(m^{-0.98})$. Here, we carefully follow the constants in his derivation and provide an explicit upper bound on the the error terms, since we are interested in bounds that can be applied for finite $m$.}
\end{theorem}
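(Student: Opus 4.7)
Since this theorem is attributed to Flajolet, with the present paper only tightening the $O(\cdot)$ remainder to an explicit non-asymptotic bound, my plan is to follow Flajolet's analytic-combinatorics pipeline while carefully tracking the constants throughout.

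The first step is to recast the counter using the independent geometric waiting times $T_k$ between transitions: $T_k$ is the number of increments spent in state $k$ before moving to $k+1$, is geometric with parameter $2^{-k}$, and the $T_k$'s are independent. Since $\{C_m\ge k+1\}=\{T_1+\cdots+T_k\le m\}$,
\begin{align}
\E(C_m)=1+\sum_{k\ge 1}\Pr(T_1+\cdots+T_k\le m).
\end{align}
Each $T_k$ has probability generating function $G_k(z)=2^{-k}z/(1-(1-2^{-k})z)$, the partial sum has PGF $\prod_{j=1}^{k}G_j(z)$, and the conversion from PMF to CDF via the factor $1/(1-z)$, followed by summing over $k$, yields a closed-form bivariate generating function for $\E(C_m)$. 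The substitution $z=e^{-t}$ then casts $\E(C_m)$ as a Laplace-type integral of a well-behaved kernel.

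The second step is a Mellin transform in $t$. The transform has a double pole at $s=0$, whose residue is the leading $\log m+\mu$ term, and simple poles at $s_k=2\pi ik/\ln 2$ for $k\ne 0$, whose residues assemble into a Fourier series of period $1$ in $\log m$; this series is the periodic $g(\cdot)$, whose amplitude bound $|g|<10^{-5}$ follows from numerically summing the first few Fourier coefficients together with geometric decay of the remainder. The constant $\mu=\gamma/\ln 2+1/2-\sum_{i\ge 1}1/(2^i-1)$ emerges from the Laurent expansion of the integrand about $s=0$: the $\gamma/\ln 2$ term arises from the expansion of the Gamma function, the $1/2$ from the Bernoulli correction in the kernel, and the series from the logarithmic derivative of the geometric PGFs evaluated at $z=1$.

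The third step, which is the main obstacle, is to upgrade Flajolet's asymptotic $O(m^{-0.98})$ into the explicit bound $|\phi(m)|\le\min\{1,2^{\sqrt{16\log m}}(\log m)^{4.5}/(2m)\}$. For this I would shift the Mellin inversion contour past all principal poles and bound the residual vertical integral using uniform estimates for $|\Gamma(\sigma+it)|$ and for the truncated Euler product $\prod_j G_j$ along the shifted line. A saddle-point balance $\sigma\sim\sqrt{\log m/\log 2}$ produces both the $1/m$ decay and the sub-polynomial factor $2^{\sqrt{16\log m}}(\log m)^{4.5}$, while the trivial bound $1$ absorbs the small-$m$ regime. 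The bookkeeping here is delicate because Flajolet's paper leaves several constants implicit, so matching the exact exponents $\sqrt{16\log m}$ and $4.5$ requires working out every constant factor in the Gamma-function and Euler-product estimates on the shifted contour.
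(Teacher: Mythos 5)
Your proposal takes essentially the same route as the paper: the paper does not prove this result itself but cites Flajolet's Mellin-transform/residue analysis (geometric waiting-time decomposition, double pole at $s=0$ producing $\log m+\mu$, imaginary poles $2\pi i k/\ln 2$ producing the periodic $g$, and a remainder analysis for $\phi(m)$), which is exactly the pipeline you outline. The one caveat is that the explicit bound $|\phi(m)|\le\min\{1,2^{\sqrt{16\log m}}(\log m)^{4.5}/(2m)\}$ --- the only place the paper claims to go beyond Flajolet, per its footnote --- is also the step you leave as ``delicate bookkeeping'' rather than carry out, though your plan for it (uniform Gamma-function and Euler-product estimates on a shifted contour with the $\sigma\sim\sqrt{\log m}$ balance) is the right mechanism for producing a bound of that shape.
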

In his paper, Flajolet approximated $\E (C_m)$ with the Mellin integral transform of some function $\phi$ related to the marginal distribution of the counter, and then used Cauchy's residue theorem in order to compute the integral. The constant $\mu$ arises from the residual of the function at $0$, where it has a double pole. Thus, if we are interested in approximating $\log m$ using the counter, then using $C_m-\mu$ as our approximation guarantees that on average our additive error will not be more than $10^{-5}+\phi(m)$, a property that we leverage in our entropy estimation algorithm. 
\subsection{ Finite-State Bias Estimation Machine}
In the bias estimation problem, we are given access to i.i.d. samples drawn from the $\Bern(p)$ distribution, and we wish to estimate the value of $p$ under the expected quadratic loss (also known as mean squared error distortion measure).
The $S$-state randomized bias estimation algorithm presented below was purposed by~\cite{samaniego1973estimating}, and the Markov chain induced by algorithm is described in Figure~\ref{fig:samaniego}.
\begin{algorithm}
\caption{IncrementBiasEstimation}\label{alg:bias}
\begin{algorithmic}[1]
\Require {Number of states} $S$, {previous memory state} $s$, {a sample $X\sim\Bern(p)$}
\Ensure {Next memory state} $s$, {parameter estimate $\hat{p}$}
\State $B \gets \text{Draw from a }\mathsf{Bern}\left(\frac{s-1}{S-1}\right) \text{ distribution}$
\If{$X=1$}
\State $s \gets s+\overline{B}$
    \Else
    \State $s \gets s-B$    
\EndIf
\State $\hat{p}\gets \frac{s-1}{S-1}$
\end{algorithmic}
\end{algorithm} 
 \begin{center}
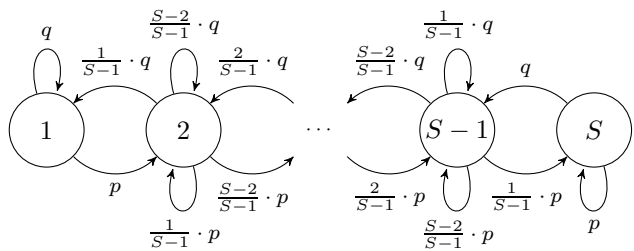
\begin{figure}[H]
\centering
\setlength\belowcaptionskip{-1.4\baselineskip}
\begin{tikzpicture}
  \tikzset{
    >=stealth',
    node distance=1.8cm,
    state/.style={font=\scriptsize,circle, align=center,draw,minimum size=28pt},
    dots/.style={state,draw=none}, edge/.style={->},
  }
  \node [state ,label=center:$1$] (1) {};
  \node [state ,label=center:$2$] (2)   [right of = 1] {};
  \node [dots]  (dots)  [right of = 2] {$\cdots$};
  \node [state ,label=center:$k$] (k)   [right of = dots] {};
  \node [dots]  (dots2)  [right of = k] {$\cdots$};
  \node [state,label=center:$S-1$] (S-1) [right of = dots2]  {};
  \node [state,label=center:$S$] (S) [right of = S-1]  {};
  \path [->,draw,thin,font=\footnotesize]  (1) edge[loop above] node[above] {$q$} (1);
  \path [->,draw,thin,font=\footnotesize]  (1) edge[bend right=45] node[below] {$p$} (2);
  \path [->,draw,thin,font=\footnotesize]  (2) edge[bend right=45] node[above] {$\frac{1}{S-1}\cdot q$} (1);
  \path [->,draw,thin,font=\footnotesize]  (2) edge[loop above] node[above] {$\frac{S-2}{S-1}\cdot q$} (2);
  \path [->,draw,thin,font=\footnotesize]  (2) edge[bend right=45] node[below] {$\frac{S-2}{S-1}\cdot p$} (dots);
  \path [->,draw,thin,font=\footnotesize]  (2) edge[loop below] node[below] {$\frac{1}{S-1}\cdot p$} (2);
  \path [->,draw,thin,font=\footnotesize]  (dots) edge[bend right=45] node[above] {$\frac{2}{S-1}\cdot q$} (2);
  \path [->,draw,thin,font=\footnotesize]  (dots) edge[bend right=45] node[below] {} (k);
  \path [->,draw,thin,font=\footnotesize]  (k) edge[bend right=45] node[above] {$\frac{k-1}{S-1}\cdot q$} (dots);
  \path [->,draw,thin,font=\footnotesize]  (k) edge[loop above] node[above] {$\frac{S-k}{S-1}\cdot q$} (k);
  \path [->,draw,thin,font=\footnotesize]  (k) edge[bend right=45] node[below] {$\frac{S-k}{S-1}\cdot p$} (dots2);
  \path [->,draw,thin,font=\footnotesize]  (k) edge[loop below] node[below] {$\frac{k-1}{S-1}\cdot p$} (k);
  \path [->,draw,thin,font=\footnotesize]  (dots2) edge[bend right=45] node[above] {} (k);
  \path [->,draw,thin,font=\footnotesize]  (dots2) edge[bend right=45] node[below] {$\frac{2}{S-1}\cdot p$} (S-1);
  \path [->,draw,thin,font=\footnotesize]  (S-1) edge[bend right=45] node[above] {$\frac{S-2}{S-1}\cdot q$} (dots2);
  \path [->,draw,thin,font=\footnotesize]  (S-1) edge[bend right=45] node[below] {$\frac{1}{S-1}\cdot p$} (S);
  \path [->,draw,thin,font=\footnotesize]  (S-1) edge[loop below] node[below] {$\frac{S-2}{S-1}\cdot p$} (S-1);
  \path [->,draw,thin,font=\footnotesize]  (S) edge[bend right=45] node[above] {$q$} (S-1);
  \path [->,draw,thin,font=\footnotesize]  (S-1) edge[loop above] node[above] {$\frac{1}{S-1}\cdot q$} (S-1);
  \path [->,draw,thin,font=\footnotesize]  (S) edge[loop below] node[below] {$p$} (S);
\end{tikzpicture}
\caption{Randomized bias estimation machine ($q=1-p$)}\label{fig:samaniego}
\end{figure}
\end{center}
The performance of this algorithm was carefully analyzed by~\cite{leighton1986estimating} where it was shown, using the Markov chain tree theorem, to asymptotically induce a $\mathrm{Binomial}(S-1,\theta)$ stationary distribution on the memory state space, which implies that $\E(\hat{p}-p)^2\leq O(1/S)$. In the same paper, it was further showed that the machine is order-optimal, by proving a lower bound of $\E(\hat{p}-p)^2\geq \Omega(1/S)$ for any finite-state estimator.
For completeness, we provide a simple proof for the MSE achieved by this construction in the appendix.
\begin{lemma}\label{lem:biasMSE}
     Let $\hat{p}(k)=\frac{k-1}{S-1}$ be the estimate of $p$ given state $k$ in the bias estimation machine of Figure~\ref{fig:samaniego}. Then we have
     \begin{align}
         \mathsf{MSE}=\lim_{t\rightarrow \infty}\E(\hat{p}(M_t)-p)^2\leq \frac{1}{S-1}.
     \end{align}
 \end{lemma}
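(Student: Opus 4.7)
The plan is to identify the stationary distribution of the Markov chain from Figure~\ref{fig:samaniego} explicitly as a shifted binomial, and then read off the MSE directly from the binomial variance. Concretely, I would guess the stationary distribution
\begin{align}
\pi(k) = \binom{S-1}{k-1} p^{k-1}(1-p)^{S-k}, \qquad k \in [S],
\end{align}
so that $\hat p(M_t) = (M_t-1)/(S-1)$ corresponds to a normalized binomial count.

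Since the induced chain is a birth--death process on $[S]$, it suffices to verify the detailed balance equations between adjacent states: from Figure~\ref{fig:samaniego} the forward rate is $P(k,k+1) = \tfrac{S-k}{S-1}\,p$ and the backward rate is $P(k+1,k) = \tfrac{k}{S-1}(1-p)$. Plugging in the candidate $\pi$, the required identity $\pi(k) P(k,k+1) = \pi(k+1) P(k+1,k)$ reduces to the elementary binomial identity $\binom{S-1}{k-1}(S-k) = \binom{S-1}{k}\, k$, which is immediate. For $0<p<1$ the chain is irreducible and aperiodic (the self-loops at every interior state guarantee aperiodicity), so $M_t$ converges in distribution to $\pi$; the degenerate cases $p\in\{0,1\}$ give a point mass at state $1$ or $S$ and the MSE is trivially zero.

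Once stationarity is established, the bound is immediate. Writing $K = M_t-1$, in the limit $K \sim \mathrm{Binomial}(S-1,p)$, and hence $\E[\hat p(M_t)] = p$ and
\begin{align}
\lim_{t \to \infty} \E\bigl(\hat p(M_t)-p\bigr)^2 = \mathrm{Var}(\hat p(M_t)) = \frac{(S-1)\,p(1-p)}{(S-1)^2} = \frac{p(1-p)}{S-1} \leq \frac{1}{4(S-1)} \leq \frac{1}{S-1}.
\end{align}

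There is no real obstacle here; the only mildly non-obvious step is guessing the shifted binomial as the stationary distribution, after which detailed balance is a one-line check. The argument also shows that the bound of the lemma is loose by a factor of $4$, but the cleaner form $1/(S-1)$ is sufficient for the way this lemma is invoked in the entropy estimation analysis.
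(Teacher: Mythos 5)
Your proposal is correct and follows essentially the same route as the paper: identify the stationary distribution of the chain in Figure~\ref{fig:samaniego} as the shifted $\mathrm{Binomial}(S-1,p)$ law and read off the MSE as $\frac{p(1-p)}{S-1}\leq\frac{1}{S-1}$. The only difference is cosmetic: you verify stationarity via the detailed balance equations of the birth--death structure, while the paper checks the global balance equations $\sum_i \mu_i p_{i,k+1}=\mu_{k+1}$ directly; both reduce to the same binomial identity.
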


\section{Upper Bound - Entropy Estimation Algorithm}\label{sec:upper}
In this section we prove Theorem~\ref{thm:upper_bound}, that is, we show the existence of an $S$-state randomized entropy estimation machine with $S \leq \frac{(c+1)^4n\cdot (\log n)^4}{\beta^2\delta}$ states that achieves additive error $\veps$ of at most $10^{-5}+\beta$. The basic idea is to let nature draw some $X$ from $p$ and use a Morris counter to approximate $-\log p_X$, then, since we are looking for $H(p)=\E(-\log p_X)$, use a bias estimation machine to simulate the averaging operation, by randomly generating coin tosses with bias that is proportionate to our estimate of $-\log p_X$. The bias estimation machine is incremented whenever a count is concluded in a randomized clock, which is simulated by another Morris counter. For a sufficiently large number of samples, this averaging converges (approximately) to the mean of $-\log p_X$, and thus outputs an approximation to the true underlying entropy. We divide our presentation to four parts: in the first part we describe the algorithm; in the second part we count the total number of states used by the algorithm; in the third part we assume the bias estimation machine is fed with an infinite number of i.i.d. samples and analyze the performance of the algorithm; and in the fourth part we relax this assumption by studying the mixing time of the Markovian process induced by our bias estimation machine. This allows us to prove an upper bound on the number of samples the developed algorithm requires.      
\begin{algorithm}
\caption{Entropy Estimation with Morris Counters}\label{alg:ent_est}
\begin{algorithmic}[1]
\Require  {A data stream} $X_1,X_2\ldots \sim p$,
{alphabet size} $n$, {run time} $t$, {error probability} $\delta$, $\beta>0$ ,$c>1$, {constant} $\mu$ 
\Ensure {Entropy estimate} $\hat{H}$\\
{Set} 
\begin{align}
& B\gets\min \{k\in \mathbb{N}:\lceil n^c \rceil\leq 2^k\}, \hspace{2mm} M \gets B+1\\&\eta \gets \text{Monte Carlo estimate of } \E (\log N) \text{ for } N=\sum_{k=1}^{M-1}\tau_k, \text{ where } \tau_k\sim \text{Geo}(2^{-k})\\&a\gets 1-\frac{\mu+\eta}{M},\hspace{2mm} S_{\text{bias}}\gets\left\lceil\frac{4M^2}{\beta^2\delta}\right\rceil+1\\& C_N\gets 1, \hspace{2mm} C_{N_x}\gets 1,\hspace{2mm} s\gets1 
\end{align}
\For {$i = 1,\ldots,t$}
\If{$C_{N}=1$} 
\State $x_{\text{test}}\gets X_i$
    \State $C_{N_x}\gets 1$
\Else
\State $C_N=\text{IncrementMorrisCounter}(C_N)$
\If {$X_i=x_{\text{test}}$}
\State $C_{N_x} \gets \text{IncrementMorrisCounter}(C_{N_x})$
\EndIf
\If{$C_{N_x}<2M$}  
\If {$C_N=M$} 
    \State $\theta_{N_x} \gets a-\frac{C_{N_x}-(\mu+\eta)}{2M}$
    \State $(s,\hat{\theta}) \gets$ \text{IncrementBiasEstimation}($S_{\text{bias}},s,\theta_{N_x}$) 
    \State $C_N\gets 1$
\EndIf
\Else 
\State $C_N\gets 1$
\EndIf
\EndIf
\EndFor
    \State $\hat{H}\gets 2M(\hat{\theta}-a)$
\end{algorithmic}
\end{algorithm}


\subsection{Description of the algorithm}
\begin{enumerate}
\item The algorithm receives an \textit{accuracy parameter} $\beta>0$ and an \textit{overhead parameter} $c>1$.
    \item In each iteration of the algorithm we collect a fresh sample $X\sim p$, and store its value. 
    Assuming the received sample is $x$, we proceed to estimate $\log p_x$ based on more fresh samples using Morris counters. 
    \item We use two Morris counters - one that approximates a clock, and one that approximates a count for $x$ values.
    \begin{itemize}
        \item The first counter is randomly incremented whenever a new sample is observed, and it stops when it reaches state $M=B +1$ states, where $B$ is the smallest integer $k$ such that $\lceil n^c \rceil\leq 2^k$. We denote the state of this counter as $C_N$. Let $N$ denote the \emph{random} time it takes  the counter to arrive at state $M$. We will show in the sequel that $N$ is expected to be around $\lceil n^c \rceil$ (up to small factors), thus this counter essentially approximates a clock that counts until $\lceil n^c \rceil$ samples are observed.
        \item The second counter is randomly incremented whenever $x$ is observed, and it stops when the first counter reaches state $M$.
        We denote the state of this counter as $C_{N_x}$.  We allow this counter to have $2M$ states to make sure the probability it ends before the first counter is sufficiently small. In the event that the counter indeed reaches state $2M$ \textit{before} the first counter, we draw a fresh sample and initialize both counters. This counter approximates the logarithm of the number of observed $x$ values in the length $N$ window.
    \end{itemize}
    \item Denoting the number of observed $x$ values in the previous stage as $N_x$, we define $\overline{C}_{N_x}\triangleq C_{N_x}-\mu-\E\log N$ to be the centralized output of the second counter. As $\E(\log N)$ is only a function of $n$, it can be calculated offline using Monte-Carlo simulation with the desired resolution (see Appendix). As we argue below, this is an almost unbiased estimator for $-\log p_x$. 
\item We now increment a bias estimation machine with $S_{\text{bias}}=\left\lceil\frac{4M^2}{\beta^2\delta}\right\rceil+1$ states whose purpose is to simulate the expectation operation.  
Specifically, each time the first Morris counter concludes a count, we generate a $\mathop{\mathrm{Ber}}(\theta_{N_x})$ random variable, with $\theta_{N_x}=a-\frac{ \overline{C}_{N_x}}{2M}$, and use it as the input to our bias estimation machine. The offset $a\triangleq 1-\frac{\E(\log N)+\mu}{2M}$ guarantees that $\theta_{N_x}\in [0,1)$ with probability $1$, as $\theta_{N_x}=1-\frac{C_{N_x}}{2M}$ and $1\leq C_{N_x}\leq 2M$ as it is the output of a Morris counter restricted to $2M$ states. Our estimator for the entropy $\hat{H}$ is the bias estimate of the machine, after subtraction of the known offset $a$ and multiplication by $2M$, that is, $\hat{H}=2M(\hat{\theta}-a)$.
\end{enumerate}
\subsection{Number of states in our machine}
As $n,t,\beta,\delta,M,\eta,a$ and $S_{\text{bias}}$ are program constants, we do not count them in the memory consumption of the algorithm. At each time point, our algorithm keeps the value of $x$, the state of the Morris counter approximating the clock, the state of the Morris counter approximating the logarithm of the $x$ counter, and the state of the bias estimation machine. Thus, the total number of states is the product of the individual number of states needed at each step, and recalling that $M=B+1\leq c\log n+2$, the total number of states is
\begin{align}
    S= n\cdot M \cdot 2M \cdot S_{\text{bias}}\leq 
    n\cdot 2M^2 \cdot \left(\frac{4M^2}{\beta^2\delta}+2\right) = n\left(\frac{8(c\log n +2)^4}{\beta^2\delta}+4(c\log n +2)^2\right).
\end{align}
\subsection{Analysis of the algorithm for $t=\infty$}
Let $X$ be the fresh sample collected at the start of an algorithm iteration. Our analysis is based on characterizing the joint distribution of $(N,N_X,C_{N_X})$, where $N$ is the number of observed samples until $C_N$ arrives at state $M$, $N_X$ is the number of times $X$ appeared in the $N$ sample window, and $C_{N_X}$ is the final state of the second counter. 
We first bound the probability that $N$ diverges significantly from $n^c$ in Lemma~\ref{lem:probN}, and then apply the result to upper bound $\E \left(N^{-\alpha}\right)$ in Lemma~\ref{lem:expN} for any $\alpha \in (0,1]$. We proceed to bound the difference between the expectation of $\overline{C}_{N_x}^\infty=C_{N_X}^\infty-\mu-\E\log N$ and $-H(p)$ in Lemma~\ref{lem:biasmall}, where $C_{N_X}^\infty$ denotes the state of an \textit{infinite} memory Morris counter. We show in Lemma~\ref{lem:Mor_bias} that the expected difference between $C_{N_X}$ and $C_{N_X}^\infty$ is $O\left((\log n)/n^c\right)$, thus proving that the absolute difference between $\E(\overline{C}_{N_x})$ and the entropy is at most $\psi_c(n)$. Lemma~\ref{lem:bias_mach} proves that the input to the bias estimation machine is an i.i.d. sequence of Bernoulli random variables, with parameter $\theta$ that equals to $\E(\overline{C}_{N_x})$, after scaling and shifting it to be in the interval $(0,1]$. This implies, by lemma~\ref{lem:bias_mach2}, that a judicious linear transformation of the output of the bias estimation machine is at most $\beta$ far from $\E(\overline{C}_{N_x})$ with probability at least $1-\delta$, thus giving the $(\veps,\delta)$ guarantee of Theorem~\ref{thm:upper_bound}.
\begin{lemma}\label{lem:probN}
For any $m\leq 2^{\ell}$ for some $1\leq \ell \leq M-1$, it holds that
\begin{align}
    \Pr(N<m)\leq e\cdot 2^{-\frac{1}{2}\cdot (M-\ell-1)^2}.
\end{align}
Furthermore, for any $m\geq \alpha\cdot 4n^c$, it holds that
\begin{align}
    \Pr(N>m)\leq 5e^{-\alpha}.
\end{align}
\end{lemma}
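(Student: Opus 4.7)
The plan is to decompose $N$ as a sum of independent geometric random variables $N=\sum_{k=1}^{M-1}\tau_k$, where $\tau_k\sim\text{Geo}(2^{-k})$ is the number of samples spent by the Morris counter in state $k$ before transitioning to state $k+1$. The two tails are then handled by separate tools: a union-bound-of-pointwise-bounds argument for the lower tail, and a Chernoff argument at a carefully chosen parameter for the upper tail.

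For the lower tail, since every $\tau_k\geq 1$ almost surely, the event $\{N<m\}$ forces $\tau_k\leq m-1$ simultaneously for every $k$. Independence then gives $\Pr(N<m)\leq \prod_{k=1}^{M-1}\Pr(\tau_k\leq m-1)$. Bernoulli's inequality yields $\Pr(\tau_k\leq m-1)=1-(1-2^{-k})^{m-1}\leq (m-1)\,2^{-k}$, which is only informative when $(m-1)\,2^{-k}<1$. Keeping only the indices $k\geq\ell+1$ (earlier ones bounded trivially by $1$) and using the hypothesis $m\leq 2^{\ell}$, each retained factor is at most $2^{\ell-k}$, so
\begin{align*}
\Pr(N<m)\;\leq\;\prod_{k=\ell+1}^{M-1} 2^{\,\ell-k}\;=\;2^{-(M-\ell-1)(M-\ell)/2}\;\leq\;2^{-(M-\ell-1)^2/2},
\end{align*}
and the $e$ factor in the stated bound absorbs minor slack from the Bernoulli estimate.

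For the upper tail I would apply Markov's inequality to $e^{sN}$ at the parameter $s=2^{-M}$. The MGF of $\tau_k$ is $\E[e^{s\tau_k}]=\frac{2^{-k}e^{s}}{1-(1-2^{-k})e^{s}}$, and using the elementary inequality $1-e^{-x}\leq x$ a short manipulation shows $\E[e^{s\tau_k}]\leq (1-2^{\,k-M})^{-1}$. Independence then gives
\begin{align*}
\E[e^{sN}]\;\leq\;\prod_{k=1}^{M-1}\frac{1}{1-2^{\,k-M}}\;=\;\prod_{j=1}^{M-1}\frac{1}{1-2^{-j}}\;\leq\;\prod_{j=1}^{\infty}\frac{1}{1-2^{-j}}\;<\;5,
\end{align*}
the last bound being a standard evaluation of the $q$-Pochhammer constant (numerical value $\approx 3.46$). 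Markov then yields $\Pr(N>m)<5\,e^{-sm}$. Since the definition of $M$ gives $2^{M-1}\leq 2n^c$, i.e.\ $2^{M}\leq 4n^c$, the hypothesis $m\geq 4\alpha n^c$ forces $sm=m/2^{M}\geq \alpha$, and the claim $\Pr(N>m)\leq 5 e^{-\alpha}$ follows.

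The main obstacle is choosing the Chernoff parameter $s$ for the upper tail: it must be small enough for the MGF of the rate-limiting variable $\tau_{M-1}$ (which has $\text{Geo}(2^{-(M-1)})$, the largest mean and smallest admissible $s$-range) to remain finite and explicitly controlled, yet large enough that $sm$ scales as $\alpha$ under the hypothesis $m\geq 4\alpha n^c$. The choice $s=2^{-M}$ balances these two constraints exactly, and the summability of $\sum_j 2^{-j}$ in the $q$-Pochhammer product is what prevents the MGF bound from scaling with $M$ and therefore with $\log n$.
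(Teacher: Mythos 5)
Your proposal is correct, but the lower-tail half takes a genuinely different route from the paper. For $\Pr(N<m)$ the paper runs a Chernoff argument with the negative tilt $s=-\ln(1+1/m)$, bounds $\mathbb{M}_N(s)$ factor by factor, and then lower-bounds the exponent $\sum_{k}\log(1+2^k/m)$ by $\tfrac12(M-\ell-1)^2$; you instead use the event inclusion $\{N<m\}\subseteq\bigcap_k\{\tau_k\leq m-1\}$, independence, and Bernoulli's inequality $1-(1-2^{-k})^{m-1}\leq (m-1)2^{-k}$, retaining only the indices $k\geq \ell+1$. This is more elementary (no moment generating function needed on that side) and in fact gives the slightly stronger bound $2^{-(M-\ell-1)(M-\ell)/2}$, so the factor $e$ in the statement is not even needed in your version, whereas in the paper it arises from $(1+1/m)^m\leq e$. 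For the upper tail your argument is essentially the paper's: both are Chernoff bounds with a positive parameter of order $1/(4n^c)$ (you take $s=2^{-M}$ and use $1-e^{-s}\leq s$ on each geometric factor; the paper takes $s=-\ln(1-1/(4n^c))$ so that $e^{-sm}=(1-1/(4n^c))^m$), and both reduce the product of MGF factors to $\prod_{j\geq 1}(1-2^{-j})^{-1}$, which you bound by the Euler/$q$-Pochhammer constant ($\approx 3.46<5$) while the paper uses the induction bound $\prod_{j=1}^{M}(1-2^{-j})\geq \tfrac14+2^{-(M+1)}$. All the supporting facts you invoke check out: $\tau_k\geq 1$ justifies the event inclusion, $2^{M}\leq 4n^c$ follows from the minimality of $B$, and the MGF of each $\mathrm{Geo}(2^{-k})$ is finite at $s=2^{-M}$ since $2^{-M}<2^{-k}\leq -\ln(1-2^{-k})$ for $k\leq M-1$.
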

\begin{proof}
Let $\tau_k$ be the time it takes to move from state $k$ to state $k+1$ in the first Morris counter and note that $\tau_k\sim \text{Geo}(2^{-k})$. Thus, $N=\sum_{k=1}^{M-1} \tau_k$. The moment generating function of $\tau_k$ is
\begin{align}
    \mathbb{M}_{\tau_k}(s)\triangleq\E\left(e^{s\tau_k}\right)=\frac{1}{1+2^k(e^{-s}-1)},
\end{align}
and it is defined for all $s<-\ln (1-2^{-k})$.
The moment generating function of $N$ is therefore
\begin{align}
   \mathbb{M}_{N}(s)=\prod_{k=1}^{M-1} \mathbb{M}_{\tau_k}(s)=\prod_{k=1}^{M-1} \frac{1}{1+2^k(e^{-s}-1)},
\end{align}
and is defined for all $s<-\ln (1-2^{-(M-1)})=-\ln (1-2^{-B})$. 
We apply Chernoff bound to prove both results. For the first bound, we have $\Pr(N<m)\leq  e^{-sm}  \cdot \mathbb{M}_{N}(s) $ for any $s<0$. Setting $ s=-\ln (1+1/m)$, we get
\begin{align}
  \Pr(N<m)&\leq   \left(1+\frac{1}{m}\right)^m \cdot \prod_{k=1}^{M-1} \frac{1}{1+\frac{2^k}{m}}\leq e \cdot 2^{-\sum_{k=1}^{M-1}\log \left(1+\frac{2^k}{m}\right)}.
\end{align} 
As $m\leq 2^\ell$ for some $1\leq \ell \leq M-1$, we have $2^k/m\geq 2^{k-\ell}$, so we can lower bound the exponent above with
\begin{align}
   \sum_{k=1}^{M-1}\log (1+2^{k-\ell})&=\sum_{k=0}^{\ell-1}\log (1+2^{-k})+\sum_{k=1}^{M-\ell-1}\log (1+2^k)\\&\geq \sum_{k=1}^{M-\ell-1}k\\&\geq \frac{1}{2}\cdot (M-\ell -1)^2.
\end{align}
For the second bound, we have $ \Pr(N>m)\leq  e^{-sm}  \cdot \mathbb{M}_{N}(s)$ for any $s>0$. Setting $ s=-\ln (1-1/4n^c)$, for which $\mathbb{M}_{N}(t)$ is well defined as $2^B\leq 2n^c$, we get 
\begin{align}
  \Pr(N>m)&\leq   \left(1-\frac{1}{4n^c}\right)^m\cdot \prod_{j=1}^{M-1} \frac{1}{1-\frac{2^j}{4n^c}}\\&\leq \left(1-\frac{1}{4n^c}\right)^m\cdot  \prod_{j=1}^{M-1} \frac{1}{1-2^{-j}}\label{eq:jbound}\\& \leq 5\exp \left\{-\frac{m}{4n^c}\right\}=5e^{-\alpha}\label{eq:prod_bound}
\end{align} 
where in~\eqref{eq:jbound} we used the fact that $\frac{2^j}{4n^c}\leq \frac{2^j}{2T}=2^{j-M}$, and in~\eqref{eq:prod_bound} we used the
bound $\prod_{j=1}^{M}(1-2^{-j})\geq \frac{1}{4}+\frac{1}{2^{M+1}}$, which can be proved via induction.

\end{proof}
\begin{lemma}\label{lem:expN}
Let $v_n(\alpha)\triangleq\sqrt{\frac{2c\alpha^3}{\log n}}+\frac{\alpha}{\log n}$. Then for any $0< \alpha \leq 1$, we have that 
 \begin{align}
    \E \left(N^{-\alpha}\right) \leq (e+1)n^{-c\cdot \alpha +v_n(\alpha)}.
\end{align}
\end{lemma}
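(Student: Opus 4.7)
The plan is to bound $\E(N^{-\alpha})$ via a two-term split at a threshold $m^*=2^{\ell}$, $\ell$ a positive integer, chosen to balance a ``large $N$'' contribution (using $N^{-\alpha}\leq (m^*)^{-\alpha}$ on $\{N\geq m^*\}$) against a ``small $N$'' contribution (using the crude bound $N^{-\alpha}\leq 1$ on $\{N<m^*\}$, which is valid since $N\geq 1$ always, and then invoking the left-tail bound of Lemma~\ref{lem:probN}). Concretely, for $1\leq\ell\leq M-1$ I would write
\begin{align*}
\E(N^{-\alpha}) \;\leq\; (m^*)^{-\alpha}\Pr(N\geq m^*)+\Pr(N<m^*) \;\leq\; 2^{-\alpha\ell} + e\cdot 2^{-(M-\ell-1)^2/2}.
\end{align*}

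To balance the two terms I would substitute $\ell=M-1-k$ and look for the smallest integer $k$ that makes the Gaussian-tail term dominated by the polynomial one, i.e.\ $k^{2}/2 \geq \alpha\ell = \alpha(M-1-k)$. Taking $k=\lceil\sqrt{2\alpha(M-1)}\rceil$ suffices, since then $k^{2}\geq 2\alpha(M-1)\geq 2\alpha\ell$, so that $e\cdot 2^{-k^{2}/2}\leq e\cdot 2^{-\alpha\ell}$ and the bound collapses to $(e+1)\cdot 2^{-\alpha(M-1)+\alpha k}$. I would then plug in the numerical bounds on $M-1=B$ coming from its definition: the inequality $\lceil n^c\rceil\leq 2^{B}$ gives $M-1\geq c\log n$, hence $2^{-\alpha(M-1)}\leq n^{-c\alpha}$, while $2^{B-1}<\lceil n^c\rceil\leq n^{c}+1$ gives $M-1\leq c\log n + O(1)$, so $k\leq\sqrt{2\alpha(M-1)}+1\leq\sqrt{2c\alpha\log n}+1+o(1)$ after the subadditive estimate $\sqrt{a+b}\leq\sqrt{a}+\sqrt{b}$. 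Multiplying by $\alpha$ and exponentiating yields $\alpha k\leq\sqrt{2c\alpha^{3}\log n}+\alpha+o(1)=v_n(\alpha)\log n+o(1)$, and assembling the estimates gives the claimed $(e+1)\,n^{-c\alpha+v_n(\alpha)}$.

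The main obstacle I foresee is arithmetic hygiene: making sure that the slacks introduced by the ceiling in the integer choice of $k$, by the small gap $M-1-c\log n\in[0,O(1)]$, and by the subadditive square-root estimate all collapse into exactly the exponent $\sqrt{2c\alpha^{3}\log n}+\alpha$ dictated by $v_n(\alpha)\log n$, rather than into something slightly larger. A small corner case is when $n$ is so small that the chosen $\ell=M-1-k$ falls below $1$ and Lemma~\ref{lem:probN}'s range condition fails; in that regime the conclusion holds trivially from $\E(N^{-\alpha})\leq 1$ together with the observation that $(e+1)\,n^{-c\alpha+v_n(\alpha)}\geq 1$ for such $n$.
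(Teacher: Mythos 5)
Your decomposition, your threshold choice $\ell=M-1-\lceil\sqrt{2\alpha(M-1)}\rceil$, and your use of Lemma~\ref{lem:probN} are exactly the paper's proof; the only place you diverge is the final exponent bookkeeping, and there your plan as written does not close. You propose to control $k=\lceil\sqrt{2\alpha(M-1)}\rceil$ via the two-sided estimate $M-1\le c\log n+O(1)$ and claim the resulting slack is $o(1)$; it is not. Writing $M-1\le c\log n+2$ and using $\sqrt{a+b}\le\sqrt{a}+\sqrt{b}$ gives $\alpha k\le\sqrt{2c\alpha^{3}\log n}+\alpha+2\alpha^{3/2}$, and the extra $2\alpha^{3/2}$ is a constant that does not vanish as $n\to\infty$, so your route only yields $(e+1)\,2^{2\alpha^{3/2}}\,n^{-c\alpha+v_n(\alpha)}$ (a factor up to $4$ worse at $\alpha=1$), not the stated inequality. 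The fix, which is what the paper's chain of inequalities implicitly relies on, is to never upper bound $M-1$ at all: after $k\le\sqrt{2\alpha(M-1)}+1$, the exponent of the second term is $-\alpha(M-1)+\alpha\sqrt{2\alpha(M-1)}+\alpha$, and the map $x\mapsto-\alpha x+\alpha\sqrt{2\alpha x}$ is decreasing for $x\ge\alpha/2$; since $M-1\ge c\log n\ge 1\ge\alpha/2$, the exponent is at most $-c\alpha\log n+\sqrt{2c\alpha^{3}\log n}+\alpha=(-c\alpha+v_n(\alpha))\log n$, so the second term is at most $n^{-c\alpha+v_n(\alpha)}$, while the first term is bounded directly by $e\cdot 2^{-\frac{1}{2}\lceil\sqrt{2\alpha(M-1)}\rceil^{2}}\le e\cdot 2^{-\alpha(M-1)}\le e\,n^{-c\alpha}\le e\,n^{-c\alpha+v_n(\alpha)}$, giving the constant $e+1$. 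Your corner-case remark (when $M-1-k<1$ and the range condition of Lemma~\ref{lem:probN} fails) is a sensible patch that the paper does not address.
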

\begin{proof}
    Appealing to Lemma~\ref{lem:probN}, for any $1\leq \ell \leq M-1,m\leq 2^\ell$, we have
  \begin{align}
    \E(N^{-\alpha}) &\leq \Pr (N<m)+m^{-\alpha}\cdot \Pr (N\geq m) \\&\leq e\cdot 2^{-\frac{1}{2}\cdot (M-\ell-1)^2}+2^{-\ell\cdot\alpha}.
\end{align}  
Setting $\ell = M-1-\left\lceil\sqrt{2\alpha \cdot (M-1)}\right\rceil$ and recalling that $n^c\leq 2^{M-1}$, we get
\begin{align}
   \E(N^{-\alpha}) &\leq e\cdot 2^{-\frac{1}{2}\left\lceil\sqrt{2\alpha(M-1)}\right\rceil^2}+2^{-\alpha (M-1)\cdot \left(1-\frac{\left\lceil\sqrt{2\alpha(M-1)}\right\rceil}{M-1}\right)}\\&\leq e\cdot
   2^{-\alpha (M-1)}+2^{-\alpha (M-1)\cdot \left(1-\frac{\sqrt{2\alpha(M-1)}+1}{M-1}\right)}
 \leq e\cdot n^{-c\cdot\alpha}+n^{-c\cdot \alpha+v_n(\alpha)}.
\end{align}
\end{proof}
According to Theorem~\ref{thm:morris}, the value of the infinite memory Morris counter after $m$ updates is close to $\log m$ in expectation, up to some small bias. We would like to show that, given $N$ and $N_X$, the expectation of our counter is close to the expectation of $\log(N_X)$, which is the expectation of the logarithm of a Binomial variable, and that centering and taking the expectation over $(N,X)$ gives us approximately $-H(p)$. To that end, we first analyze the algorithm under the assumption that the second counter is an infinite memory Morris counter, which we denote as $C_{N_X}^\infty$, and then we prove that the expected gap between $C_{N_X}^\infty$ and $C_{N_X}$ is small.
\begin{lemma}\label{lem:biasmall}
 Let $\phi_c(n)=(e+1)n^{-(c-1)+v_n(1)}+\min\{1, C\cdot  n^{-\frac{1}{2}\cdot (c-1)+v_n(1/2)}\}$, where $C=2(e+1)\cdot 10^{8}$. Then
\begin{align}
  |\E (\overline{C}_{N_x}^\infty) +H(p)|\leq 10^{-5}+\phi_c(n).  
\end{align}   
\end{lemma}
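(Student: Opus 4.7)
The plan is to condition on the pair $(N, X)$, exploiting the key fact that $N_X \mid (N, X) \sim \mathrm{Bin}(N, p_X)$, since $N$ is determined entirely by the internal Morris coin flips of the clock counter and is therefore independent of the sample values. On the event $\{N_X \geq 1\}$, Theorem~\ref{thm:morris} applied to the infinite-memory counter yields
\begin{equation*}
\E(C_{N_X}^\infty \mid N_X) = \log N_X + \mu + g(\log N_X) + \phi(N_X),
\end{equation*}
while on $\{N_X = 0\}$ the counter remains at its initial state $C_0^\infty = 1$. Using the identity $\E \log(Np_X) = \E \log N - H(p)$, the claim reduces to bounding $\E[C_{N_X}^\infty] - (\mu + \E \log(Np_X))$. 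Writing $\log N_X \mathbf{1}\{N_X \geq 1\} - \log(Np_X) = (\log N_X - \log(Np_X))\mathbf{1}\{N_X \geq 1\} - \log(Np_X)\mathbf{1}\{N_X = 0\}$ decomposes this difference into four summands: a Jensen/Taylor bias on $\{N_X \geq 1\}$, a boundary error supported on $\{N_X = 0\}$, the periodic contribution $g$, and the Flajolet residual $\phi$.

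The periodic term is immediately bounded by $10^{-5}$, since $\|g\|_\infty \leq 10^{-5}$ by Theorem~\ref{thm:morris}; this accounts for the explicit constant in the statement. For the Jensen/Taylor piece I would use the sandwich $z - z^2/2 \leq \log(1+z) \leq z$ for $z > -1$ applied to $z = (N_X - Np_X)/(Np_X)$ on $\{N_X \geq 1\}$. Conditional on $(N, X)$, the first-order term in $z$ has mean zero and the variance identity $\Var(N_X \mid N, X) = Np_X(1-p_X)$ makes the second-order correction at most $(1-p_X)/(2Np_X)$ in absolute value. Taking the outer expectation over $X \sim p$ and invoking $\E_{X \sim p}[1/p_X] = n$ turns this into an $O(n/N)$ quantity, which by Lemma~\ref{lem:expN} at $\alpha = 1$ is at most $(e+1)\, n^{-(c-1)+v_n(1)}$: the first summand of $\phi_c(n)$.

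The boundary term $\E[(1 - \mu - \log(Np_X))\mathbf{1}\{N_X = 0\}]$ I would split by whether $Np_X \geq 1$ or $Np_X < 1$. In the first regime, $p_X (1-p_X)^N \leq p_X e^{-Np_X} \leq 1/(eN)$ and $|\log(Np_X)| \leq \log N$; in the second, the inequality $p_X \log(1/p_X) \leq (\log N)/N$ (valid since $p_X < 1/N < 1/e$) bounds each term. Both regimes yield $O((\log N)\cdot n/N)$, which, after $\E_N$, is absorbed into the first summand of $\phi_c(n)$ as the spurious $\log n$ factor is dominated by $n^{v_n(1)}$. Finally, the Flajolet residual is controlled by the two-regime bound $|\phi(m)| \leq \min\{1, 2^{\sqrt{16\log m}}(\log m)^{4.5}/(2m)\}$: the trivial $|\phi| \leq 1$ bound produces the $1$ inside $\min\{1, \cdot\}$, while for the sharper piece a Cauchy--Schwarz over $X \sim p$ introduces $\sqrt{n}$ and Lemma~\ref{lem:expN} at $\alpha = 1/2$ gives $\sqrt{n}\, \E[1/\sqrt{N}] \leq (e+1)\, n^{-(c-1)/2 + v_n(1/2)}$, with the polylogarithmic prefactor absorbed into the numerical constant $C = 2(e+1)\cdot 10^8$.

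The most delicate part of the argument is the small-$p_X$ regime, where the Taylor expansion of $\log(1+z)$ degenerates near $z = -1$ and the event $\{N_X = 0\}$ carries non-negligible mass. The $\min\{1, \cdot\}$ form of the second summand of $\phi_c(n)$ essentially encodes the intended workaround: for moderately small $n$ the trivial bound of $1$ suffices, while for very large $n$ the sharper $n^{-(c-1)/2 + v_n(1/2)}$ bound kicks in. Balancing these two regimes cleanly, aggregating the boundary and Flajolet contributions via Cauchy--Schwarz so that the outer $\E_X$ only produces $\sqrt{p_X}$ (rather than negative powers of $p_X$ that would blow up), and matching the explicit constants in the statement is where the bulk of the technical work resides.
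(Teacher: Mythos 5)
Your high-level architecture matches the paper's: condition on $(N,X)$, invoke Theorem~\ref{thm:morris} for the infinite counter, cancel $\E\log N$ against $\log(Np_X)$ so that the task reduces to bounding $\E\bigl[\log\tfrac{N_X}{Np_X}\bigr]$ plus $\E[g(\log N_X)]\le 10^{-5}$ plus the Flajolet residual, the latter handled exactly as in the paper (trivial bound $1$ versus $2\cdot 10^8\,\E[N_X^{-1/2}]$, a $\sqrt{n}$ from the sum over $x$, and Lemma~\ref{lem:expN} at $\alpha=1/2$). The gap is in the one step that carries the real content: the lower (negative-bias) direction of $\E\bigl[\log\tfrac{N_X}{Np_X}\bigr]$. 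Your sandwich $z-z^2/2\le\log(1+z)\le z$ with $z=(N_X-Np_X)/(Np_X)$ is false for $z\in(-1,0)$, and no quadratic bound can work near $z=-1$: when $p_X$ is small the event $N_X\ll Np_X$ (e.g.\ $N_X=1$ with $Np_X$ large) has non-negligible probability and contributes a negative term of order $\log(Np_X)$ times that probability, which requires a binomial lower-tail argument you never carry out. Your closing remark that the $\min\{1,\cdot\}$ form of the second summand "encodes the workaround" conflates two different error sources: that term bounds the Flajolet residual $\phi(N_X)$, not the Jensen gap of the logarithm, so it cannot absorb the degenerate small-$p_X$ regime. Also, restricted to $\{N_X\ge1\}$ the first-order term does not have conditional mean zero, which adds yet another uncontrolled correction.

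The paper avoids all of this with one observation you did not use: by construction $N_X\sim 1+\mathrm{Bin}(N-1,p_x)$ given $(N,X=x)$ (the test symbol itself is counted), so $N_X\ge 1$ almost surely and the exact identity $\E\bigl[\tfrac{1}{N_X}\mid N,X=x\bigr]=\tfrac{1-(1-p_x)^N}{Np_x}\le\tfrac{1}{Np_x}$ holds. A single application of Jensen (convexity of $-\log$) then gives $\E\bigl[\log\tfrac{N_X}{Np_X}\bigr]\ge 0$, so the lower direction costs nothing, and the upper direction is the concave-Jensen step you also have, giving $\E[1/(Np_X)]=n\,\E[1/N]\le(e+1)n^{-(c-1)+v_n(1)}$. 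The same inequality $\E[1/N_X\mid N,X]\le 1/(Np_X)$ feeds the residual bound $\E[N_X^{-1/2}]\le\E[(Np_X)^{-1/2}]$. In your model, which permits $N_X=0$, you are forced into the separate boundary analysis; even granting it, the resulting $O\bigl(n\,\E[\log N/N]\bigr)$ terms exceed $n\,\E[1/N]$ by logarithmic factors and cannot be "absorbed" into the exact first summand $(e+1)n^{-(c-1)+v_n(1)}$ of $\phi_c(n)$. That is a secondary, constants-level issue; the substantive missing piece is a valid control of the lower tail of $\log(N_X/(Np_X))$, which in the paper is precisely what the $1+\mathrm{Bin}(N-1,p_x)$ structure plus Jensen delivers.
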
 
\begin{proof}
As $C_{N_X}^\infty$ is the state of the infinite memory Morris counter, we have that $\overline{C}_{N_x}^\infty= C_{N_x}^\infty-\mu-\E\log N$. Given $N$ and $X=x$, the number of $x$ observations in the $N$ sample window is distributed  Binomial($N,p_x$). 
We show that $\E(\overline{C}_{N_x}^\infty)$ is close to the expected logarithm of the  normalized Binomial random variable, which then gives us $-H(p)$ plus some bias. From Theorem~\ref{thm:morris}, we have
\begin{align}
    \E(\overline{C}_{N_x}^\infty\mid X,N_X)=\E(C_{N_X}^\infty-\mu -\E(\log N)\mid X,N_X)= \log N_X-\E(\log N)+g(\log N_X) +\phi(N_X).
\end{align}
Note that
\begin{align}
  \E(\log N_X\mid X=x)-\E(\log N)&=\sum_{k=1}^\infty \Pr(N_x=k)\log k - \sum_{m=1}^\infty \Pr(N=m)\log m  \\& =\sum_{m=1}^\infty\sum_{k=1}^m\Pr(N=m,N_x=k) \log k-\sum_{m=1}^\infty \Pr(N=m)\log m \\&=\sum_{m=1}^\infty\Pr(N=m)\sum_{k=1}^m \Pr(N_x=k\mid N=m)\log \frac{k}{m} \\&=\E\left(\frac{N_X}{N}\mid X=x\right).
\end{align}
Thus, letting $\gamma_{N_x}=g(\log N_x) +\phi(N_x)$, we have
\begin{align}
  \E(\overline{C}_{N_x}^\infty\mid X)=\E\left(\log \frac{N_X}{N}\mid X\right)+\E(\gamma_{N_X}\mid X) =\log p_X+\E\left(\log \frac{N_X}{N\cdot p_X}\mid X\right)+\E(\gamma_{N_X}\mid X),
\end{align}
implying that $    \E(\overline{C}_{N_x}^\infty)=-H(p)+\E\left(\log \frac{N_X}{N\cdot p_X}\right)+\E(\gamma_{N_X})$.
We conclude the proof by bounding $\E\left(\log \frac{N_X}{N\cdot p_X}\right)$ in  Lemma~\ref{lem:smallratio}, and then bounding $\E(\gamma_{N_X})$ in Lemma~\ref{lem:gammaB}.
\end{proof}

\begin{lemma}\label{lem:smallratio}
It holds that
\begin{align}
   0\leq \E\left(\log \frac{N_X}{N\cdot p_X}\right)\leq (e+1)n^{-(c-1)+v_n(1)}.
\end{align}
\end{lemma}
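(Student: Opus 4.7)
The plan is to reduce the expectation to a binomial calculation by conditioning on $N$ and $X$, and then control it via Jensen-type inequalities together with Lemma~\ref{lem:expN}. First, I would condition on $N = m$ and $X = x$, so that $N_X \sim \mathrm{Bin}(m, p_x)$. Following the convention used in the derivation of Lemma~\ref{lem:biasmall} (where the series defining $\E[\log(N_X/N)\mid X=x]$ ranges only over $N_X \geq 1$), the quantity to bound becomes
\[
\sum_x p_x\, \E_N \E\!\left[\log\!\left(\tfrac{N_X}{m p_x}\right)\mathbf{1}_{N_X\geq 1} \,\Big|\, N = m,\, X = x\right].
\]

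For the upper bound, I would apply the tangent-line inequality $\log z \leq (z-1)/\ln 2$ (valid for all $z>0$) to $z = N_X/(m p_x)$ on the event $\{N_X\geq 1\}$. Since $\E[N_X/(m p_x)\mid N=m, X=x] = 1$ and the $N_X = 0$ term contributes zero on the right-hand side, this gives $\E[\log(N_X/(m p_x))\mathbf{1}_{N_X\geq 1}\mid N=m, X=x] \leq \Pr(N_X = 0 \mid N=m, X=x)/\ln 2 = (1-p_x)^m/\ln 2$. Using $(1-p)^m \leq e^{-mp}$ together with the elementary inequality $xe^{-x}\leq 1/e$ (so that $e^{-m p_x}\leq 1/(e\,m p_x)$), summing over $x$ yields $\sum_x p_x(1-p_x)^m \leq n/(em)$. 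Taking expectation over $N$ and invoking Lemma~\ref{lem:expN} with $\alpha=1$ to get $\E[1/N]\leq (e+1)n^{-c+v_n(1)}$, we obtain $\Pr(N_X = 0)\leq \tfrac{e+1}{e}\,n^{-(c-1)+v_n(1)}$, and the claimed upper bound follows since $1/(e\ln 2) < 1$ is absorbed into the $(e+1)$ prefactor.

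For the lower bound, the argument is more delicate, since Jensen's inequality applied to the concave $\log$ only gives the reverse direction: the nonnegativity really does rely on the convention $\log 0 = 0$ at $N_X = 0$, which removes what would otherwise be a $-\infty$ contribution. My approach would be to decompose $\log(N_X/(Np_X))\mathbf{1}_{N_X\geq 1} = \log N_X\,\mathbf{1}_{N_X\geq 1} - \log(Np_X)\,\mathbf{1}_{N_X\geq 1}$, use $\log N_X \geq 0$ on $\{N_X\geq 1\}$, and expand $\log(Np_X) = \log N + \log p_X$. Summing $\sum_x p_x\log p_x = -H(p)$ against the $\Pr(N_X\geq 1\mid X=x)$ weights produces a near-cancellation against an estimate of $\E[\log N_X\,\mathbf{1}_{N_X\geq 1}\mid X = x]$ obtained by the same $\mathrm{Bin}(m, p_x)$ analysis (essentially $\approx \log(m p_x)$ up to a Jensen gap). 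The main obstacle is to handle the truncation errors introduced by the indicator $\mathbf{1}_{N_X\geq 1}$ across coordinates with very different values of $m p_x$, which I would absorb by showing these errors are dominated by the same $\Pr(N_X = 0)$ bound that drives the upper direction.
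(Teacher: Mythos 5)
There is a genuine gap, and it lies in the conditional law you assign to $N_X$. In the algorithm (and in the paper's proof of this lemma), given $N$ and $X=x$ one has $N_X \sim \mathrm{Bin}(N-1,p_x)+1$: the fresh sample $x$ itself provides the first count, since $C_{N_x}$ is initialized together with $x_{\text{test}}$. Hence $N_X\ge 1$ almost surely, no $\log 0$ convention or indicator truncation is needed (the sums in the proof of Lemma~\ref{lem:biasmall} start at $k=1$ for this reason, not because of a convention), and the nonnegativity is exact and easy: by Jensen applied to the convex map $t\mapsto -\log t$ it suffices to show $\E[1/N_X\mid N,X=x]\le 1/(p_x N)$, and the shifted-binomial structure gives the closed form $\E\bigl[1/(1+\mathrm{Bin}(N-1,p_x))\bigr]=\frac{1-(1-p_x)^N}{p_x N}\le\frac{1}{p_xN}$, which is exactly the paper's argument.

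Under your model $N_X\sim\mathrm{Bin}(N,p_x)$ with the truncation $\mathbf{1}_{N_X\ge 1}$, the lower bound you are trying to prove is in fact false, so the second half of your plan cannot be completed as described. For instance, with $p$ uniform on two symbols and $N=m$ large, $\E\bigl[\log\bigl(N_X/(mp_x)\bigr)\mathbf{1}_{N_X\ge1}\bigr]\approx -\frac{\Var(N_X)}{2(mp_x)^2\ln 2}=-\frac{1-p_x}{2mp_x\ln 2}<0$: the deficit is a Jensen gap of order $1/(mp_x)$, vastly larger than the $\Pr(N_X=0)=(1-p_x)^m$ quantity into which you propose to absorb the ``truncation errors.'' The exact nonnegativity claimed in the lemma is precisely the effect of the $+1$ shift, which raises the conditional mean to $1+(N-1)p_x\ge Np_x$ and makes the harmonic-mean bound above go through. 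Your upper-bound half is sound in spirit and parallels the paper's (a tangent-line/Jensen step conditional on $(N,X)$, then summing over $x$ and invoking Lemma~\ref{lem:expN} with $\alpha=1$), but it too must be redone with the correct conditional law, where it becomes $\E[\log(N_X/(Np_x))\mid N,X=x]\le\log\bigl(1+\frac{1-p_x}{Np_x}\bigr)=O\bigl(\frac{1}{Np_x}\bigr)$ and then $\sum_x p_x\cdot\frac{1}{Np_x}=n/N$, after which Lemma~\ref{lem:expN} finishes the proof.
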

\begin{proof}
  We first show that $\E\left(\log \frac{N_X}{N\cdot p_X}\right)\geq 0$. By Jensen's inequality and convexity of $t\mapsto-\log(t)$
  \begin{align}
   \E\left(\log \frac{N_X}{N\cdot p_X}\right)&=\E_{X,N}\left[\E_{N_X|N,X}\left(-\log \frac{N\cdot p_X}{N_X}\right)\right] \\&\geq -\E_{X,N}\left[\log \left(N\cdot p_X\cdot\E_{N_X|N,X}\left[\frac{1}{N_X}\right]\right)\right].
\end{align} 
To establish non-negativity of $\E\left(\log \frac{N_X}{N\cdot p_X}\right)$, it therefore suffices to show that $\E_{N_X|N,X=x}\left[\frac{1}{N_X}\right]\leq \frac{1}{p_x\cdot N}$. To that end, recall that given $X=x$ and $N$, we have $N_X \sim \mathop{\mathrm{Bin}}(N-1,p_x)+1$. Thus, we indeed have
    \begin{align}
        \E_{N_X|N,X=x}\left[\frac{1}{N_X}\right]&=\sum_{m=0}^{N-1}\frac{1}{m+1}{N-1 \choose m} p_x^m (1-p_x)^{N-m-1}\nonumber\\&=\sum_{m=0}^{N-1}\frac{1}{p_x\cdot N}{N \choose m+1} p_x^{m+1} (1-p_x)^{N-m-1}\nonumber\\&=\frac{1-(1-p_x)^{N}}{p_x\cdot N}\nonumber\\&\leq \frac{1}{p_x\cdot N}.\label{eq:bin_bound}
    \end{align}
 To upper bound $\E\left(\log \frac{N_X}{N\cdot p_X}\right)$, we use Jensen's inequality and the concavity of $t\mapsto\log t$, to obtain
\begin{align}
  \E_{N_X|N,X=x}\left(\log \frac{N_X}{N\cdot p_X}\right)&\leq  \log \left(\frac{\E_{N_X|N,X=x}[N_X]}{N\cdot p_x}\right) \\&= \log \left(1+\frac{1-p_x}{N\cdot p_x}\right)\\&\leq \frac{1}{N\cdot p_x}.
\end{align}
Thus, overall, 
\begin{align*}
\E\left[\log \frac{N_X}{N\cdot p_X}\right]&\leq\E_{N,X} \left[\frac{1}{N\cdot p_X}\right]\\&= \E_{X} \left[\frac{1}{p_X}\right]\E_{N} \left[\frac{1}{N}\right]\\&=n\cdot \E_{N} \left[\frac{1}{N}\right]
\end{align*}
and appealing to Lemma~\ref{lem:expN} with $\alpha=1$, we have $\E\left[\log \frac{N_X}{N\cdot p_X}\right]\leq (e+1)n^{-(c-1)+v_n(1)}$. 
\end{proof}

\begin{lemma}\label{lem:gammaB}
It holds that
\begin{align}
   \E (\gamma_{N_X})\leq 10^{-5}+\min\{1, C\cdot  n^{-\frac{1}{2}\cdot (c-1)+v_n(1/2)}\}.
\end{align}
\end{lemma}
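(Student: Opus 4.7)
The plan is to split $\gamma_{N_X} = g(\log N_X) + \phi(N_X)$ and bound the expectation of each summand separately. For the first summand, Theorem~\ref{thm:morris} tells us that $|g(\cdot)| \leq 10^{-5}$ uniformly, so $|\E\, g(\log N_X)| \leq 10^{-5}$. What remains is to show $|\E\,\phi(N_X)| \leq \min\{1,\, C\cdot n^{-(c-1)/2+v_n(1/2)}\}$. The ``$1$'' side of the $\min$ is immediate from the uniform bound $|\phi(m)| \leq 1$ in Theorem~\ref{thm:morris}, so the task reduces to the other side.

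For the non-trivial bound, the first step is to convert the two-piece bound $|\phi(m)| \leq \min\{1,\, \frac{2^{\sqrt{16\log m}}(\log m)^{4.5}}{2m}\}$ into a clean single-piece bound $|\phi(m)| \leq C_0/\sqrt{m}$ valid for all $m \geq 1$. Let $m^*$ be the crossover value where $\frac{2^{\sqrt{16\log m^*}}(\log m^*)^{4.5}}{2 m^*} = 1$. For $m \leq m^*$ I use $|\phi(m)| \leq 1 \leq \sqrt{m^*}/\sqrt{m}$. For $m > m^*$ I use the sharper estimate and observe that $h(m) \triangleq \frac{2^{\sqrt{16\log m}}(\log m)^{4.5}}{2\sqrt{m}}$ satisfies $h(m^*) = \sqrt{m^*}$, and a logarithmic-derivative calculation shows $h$ is decreasing on $[m^*,\infty)$; hence $|\phi(m)| \leq h(m)/\sqrt{m} \leq \sqrt{m^*}/\sqrt{m}$ there too. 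Numerically solving for $m^*$ yields $\sqrt{m^*} \leq 10^8$, so $C_0 = 10^8$ suffices.

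Next, I would bound $\E[1/\sqrt{N_X}]$. Conditioned on $N$ and $X=x$ we have $N_X \sim \bin(N-1,p_x)+1$, so by Cauchy-Schwarz $\E[1/\sqrt{N_X}\mid N,X] \leq \sqrt{\E[1/N_X\mid N,X]}$, and the Binomial identity already computed in \eqref{eq:bin_bound} gives $\E[1/N_X\mid N,X=x] \leq 1/(p_x N)$. Since $N$ depends only on the internal randomness of the clock Morris counter and is therefore independent of $X$,
\begin{align*}
\E[1/\sqrt{N_X}] \leq \E_X[1/\sqrt{p_X}] \cdot \E_N[1/\sqrt{N}].
\end{align*}
A Cauchy-Schwarz gives $\E_X[1/\sqrt{p_X}] = \sum_x \sqrt{p_x} \leq \sqrt{n}$, and Lemma~\ref{lem:expN} with $\alpha = 1/2$ yields $\E_N[1/\sqrt{N}] \leq (e+1)\, n^{-c/2+v_n(1/2)}$. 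Combining all pieces,
\begin{align*}
|\E\,\phi(N_X)| \leq C_0 \cdot \sqrt{n} \cdot (e+1)\, n^{-c/2+v_n(1/2)} = C_0 (e+1)\, n^{-(c-1)/2+v_n(1/2)},
\end{align*}
which matches the stated bound with $C = 2(e+1)\cdot 10^8$, the factor of $2$ absorbing slack in the numerical estimate of $\sqrt{m^*}$.

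The main obstacle is the uniform conversion $|\phi(m)| \leq C_0/\sqrt{m}$ with a small enough $C_0$, since the headline constant $C$ in the statement is calibrated to this choice. The monotonicity claim for $h(m)$ reduces to checking an inequality of the form $2/\sqrt{\log m} + 4.5/\log m < 1/2$, which holds for all $m$ larger than a modest threshold that lies well below $m^*$, so $h$ attains its maximum on $[m^*,\infty)$ precisely at $m^*$; the value of $m^*$ itself is then pinned down by a direct numerical solve of the transcendental equation defining the crossover.
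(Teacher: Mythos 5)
Your proof is correct and follows essentially the same route as the paper's: bound $\E\,g(\log N_X)$ by $10^{-5}$, dominate Flajolet's error term by $\phi(m)\le \min\{1,\,C_0/\sqrt{m}\}$, and then combine Jensen's inequality with $\E[1/N_X\mid N,X=x]\le 1/(p_x N)$, $\sum_x\sqrt{p_x}\le\sqrt{n}$, and Lemma~\ref{lem:expN} at $\alpha=1/2$. One numerical quibble: the crossover actually has $\sqrt{m^*}\approx 1.6\times 10^8$ rather than $\le 10^8$ (which is why the paper takes $\phi(x)\le 2\cdot 10^8/\sqrt{x}$), but since the target constant is $C=2(e+1)\cdot 10^8$, the factor-of-two slack you allow absorbs this and the argument still closes.
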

\begin{proof}
Note that $\E (g(\log N_x))\leq 10^{-5}$ is explicit in Theorem~\ref{thm:morris} for any $x\in[n]$, and in particular, $\E (g(\log N_X))\leq 10^{-5}$. Thus, it remains to upper bound $\E(\phi(N_X))$. It is straightforward to verify that $\phi(x)\leq \min \left\{1,\frac{2\cdot 10^{8}}{\sqrt{x}}\right\}$ for all $x\geq 1$, and consequently,
\begin{align}
\E(\phi(N_X))\leq \E\left[\min \left\{1,\frac{2\cdot 10^{8}}{\sqrt{N_X}}\right\}\right]\leq \min \left\{1,2\cdot 10^{8}\E\left[\sqrt{\frac{1}{N_X}}\right]\right\}.
\label{eq:EPhibound}
\end{align}
From Jensen's inequality, concavity of $t\mapsto\sqrt{t}$, and equation~\eqref{eq:bin_bound},
    \begin{align}
       \E\left[\sqrt{\frac{1}{N_X}}\right]&=\E_{N,X}\left[\E_{N_X|N,X}\left[\sqrt{\frac{1}{N_X}}\right]\right]\\&\leq \E_{N,X}\left[\sqrt{\E_{N_X|N,X}\left[\frac{1}{N_X}\right]}\right]\\&\leq \E_{N,X}\left[\sqrt{\frac{1}{p_X\cdot N}}\right]\nonumber\\
       &=\E_{N}\left[\sqrt{\frac{1}{ N}}\right]\E_{X}\left[\sqrt{\frac{1}{p_X}}\right].
       \label{eq:ENxinv}
    \end{align}
Note that, again using Jensen's inequality and concavity of $t\mapsto\sqrt{t}$, we have
\begin{align}
\E_{X}\left[\sqrt{\frac{1}{p_X}}\right]=\sum_{x=1}^n\sqrt{p_x}\leq n\sqrt{\frac{1}{n}\sum_{x=1}^n p_x} =\sqrt{n}.  
\label{ex:EPxinv}
\end{align}
Appealing to Lemma~\ref{lem:expN} with $\alpha=0.5$, we have
\begin{align}
\E(N^{-0.5})\leq (e+1)n^{-\frac{c}{2}+v_n(1/2)}.    
\label{eq:ENisqrt}
\end{align}
Thus, substituting~\eqref{ex:EPxinv} and~\eqref{eq:ENisqrt} into~\eqref{eq:ENxinv} and then into~\eqref{eq:EPhibound}, and recalling that $C=2(e+1)10^8$, we obtain the claimed result.
\end{proof}
Lemma~\ref{lem:Mor_bias} below bounds the absolute difference between the expectation of the Morris counter and the expectation of the truncated Morris counter of the algorithm by $O((\log n) /n^c)$. The proof is relegated to the appendix.  \begin{lemma}\label{lem:Mor_bias}
    We have
    \begin{align}
 |\E(C_{N_X}^\infty) -\E(C_{N_X})|\leq n^{-c}\cdot\frac{ 100(c\log n+2)}{(1-0.5n^{-c})^2} 
\end{align}
 \end{lemma}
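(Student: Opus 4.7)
The plan is to couple the infinite-memory Morris counter $C^\infty_\cdot$ with the truncated one $C_\cdot$ via a single sequence of Uniform$[0,1]$ random variables $U_1, U_2, \ldots$, and reduce the problem to a tail estimate on $C_m^\infty$. Starting both counters at state $1$, at step $i$ the counter in state $s$ advances to $s+1$ iff $U_i \leq 2^{-s}$, except that the truncated counter stays put once it reaches $2M$. Under this coupling the two counters agree up to the random time at which $C^\infty_\cdot$ first hits $2M$, after which the truncated one freezes while the infinite one may still climb. Hence at every time $t$, $C_t = \min(C_t^\infty, 2M)$, which gives
\begin{equation*}
\E(C_{N_X}^\infty) - \E(C_{N_X}) \;=\; \E\bigl[(C_{N_X}^\infty - 2M)_+\bigr] \;\geq\; 0,
\end{equation*}
so the absolute value in the claim reduces to a one-sided bound.

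Next I would control the tail of the infinite Morris counter at a deterministic number of increments $m$. A short induction gives $\E[2^{C_m^\infty}] = m + 2$ (from state $s$, the Bernoulli$(2^{-s})$ step increases $2^s$ by exactly $1$ in expectation, and $2^{C_0^\infty} = 2$). Markov's inequality then yields $\Pr(C_m^\infty \geq 2M + k) \leq (m+2)/2^{2M+k}$ for $k \geq 1$, and summing over $k$ gives
\begin{equation*}
\E\bigl[(C_m^\infty - 2M)_+\bigr] \;\leq\; \frac{m+2}{2^{2M}}.
\end{equation*}

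The remaining step is to integrate out $N_X$. Because the Morris-counter randomness is independent of the i.i.d.\ sample stream, the window length $N$ is independent of the test letter $X$, and conditioned on $(N, X=x)$ the variable $N_X$ is a Binomial$(N, p_x)$ (possibly truncated at $2M-1$ successes). Hence $\E(N_X) \leq \E(N)\sum_x p_x^2 \leq \E(N) = \sum_{k=1}^{M-1} 2^k = 2^M - 2$, and combining with $2^{M-1} \geq n^c$ and $M = B+1 \leq c\log n + 2$ produces the main $n^{-c}$ factor.

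The main obstacle I expect is matching the precise form of the stated bound, in particular the factor $c\log n + 2 = M$ in the numerator and the factor $(1-0.5n^{-c})^{-2}$ in the denominator. The former I expect to emerge from a union bound over the $M$ intermediate states traversed before the truncated counter first hits $2M$, while the latter I expect to come from a geometric-series bookkeeping over the restart events in the algorithm (whenever $C_{N_x}$ reaches $2M$ before $C_N$ reaches $M$, the window aborts and restarts, each restart contributing a small additional term whose total is a geometric sum with ratio of order $n^{-c}/2$). The coupling argument above already gives a bound strictly tighter than the stated one, so tracking these restart interactions and plugging in the crude numerical constants to reach the form $n^{-c}\cdot 100(c\log n + 2)/(1-0.5n^{-c})^2$ is bookkeeping rather than a substantive obstacle.
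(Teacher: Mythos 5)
The central identity you build on is not the right model of the algorithm's truncated counter. You couple the two counters so that the truncated one \emph{freezes} at $2M$, giving $C_{N_X}=\min(C^\infty_{N_X},2M)$ and hence $\E(C^\infty_{N_X})-\E(C_{N_X})=\E[(C^\infty_{N_X}-2M)_+]$. But the algorithm never feeds a capped value to the bias machine: when $C_{N_x}$ hits $2M$ before the clock counter reaches $M$, the window is aborted ($C_N\gets 1$), a fresh $x_{\text{test}}$ is drawn, and the trial is discarded. Consequently the quantity $\E(C_{N_X})$ in the lemma is the \emph{conditional} expectation $\E\big[C^\infty_{N_X}\,\big|\,C^\infty_{N_X}<2M\big]$, which is exactly how the paper's proof begins (total probability plus the sandwich $\E(C^\infty_{N_X})-\E\big[C^\infty_{N_X}\ind\{C^\infty_{N_X}\geq 2M\}\big]\leq\E(C_{N_X})\leq\E(C^\infty_{N_X})/\Pr(C^\infty_{N_X}<2M)$). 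The conditional difference exceeds your quantity by roughly $(2M-\E[C^\infty_{N_X}])\Pr(C^\infty_{N_X}\geq 2M)$, and that extra term is precisely where the $(c\log n+2)$ factor in the statement comes from — not from a union bound over intermediate states, as you guessed. The fact that your bound carries no logarithmic factor at all (you call it ``strictly tighter than the stated one'') is a symptom of the mismodel rather than an improvement, and the restarts are not a lower-order geometric-series correction to be absorbed into the $(1-0.5n^{-c})^{-2}$ denominator: they change what $\E(C_{N_X})$ means.

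The good news is that the rest of your argument is correct and salvages the proof cleanly once the identity is fixed. With the paper's start-at-state-$1$ convention, $\E[2^{C^\infty_m}]=m+2$ is right, and since $N_X\mid N,X=x\sim 1+\mathrm{Bin}(N-1,p_x)$ and $\E(N)=2^M-2$, averaging gives $\E[2^{C^\infty_{N_X}}]\leq 2^M$; Markov then yields $\Pr(C^\infty_{N_X}\geq 2M)\leq 2^{-M}\leq \tfrac12 n^{-c}$ and $\E[(C^\infty_{N_X}-2M)_+]\leq 2^{-M}$. Plugging these into the correct identity, $\E(C^\infty_{N_X})-\E(C_{N_X})\leq\big(\E[(C^\infty_{N_X}-2M)_+]+2M\,\Pr(C^\infty_{N_X}\geq 2M)\big)/\Pr(C^\infty_{N_X}<2M)\leq (2M+1)2^{-M}/(1-0.5n^{-c})$, which is comfortably inside the lemma's bound and is, in fact, a tidier tail estimate than the paper's (which bounds $\sum_{k\geq 2M}k\Pr(C^\infty_{N_X}=k)$ via a block decomposition over $N$ and the event that the counter jumps from state $k-1$ to $k$ within $n_x$ steps). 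So the moment-plus-Markov idea is a genuinely nice alternative to the paper's computation, but as written your proof bounds the wrong difference and does not yet establish the lemma.
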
  
We now turn to analyzing the bias estimation phase of the algorithm.
\begin{lemma}\label{lem:bias_mach}
   Let $Y_{N_1},Y_{N_2},\ldots$ denote the sequence of Bernoulli random variables fed to the bias estimation machine. Then
   \begin{align}
       Y_{N_1},Y_{N_2},\ldots\overset{i.i.d.}{\sim}\mathsf{Bern}\left(\theta\right),
   \end{align}
   where $\theta=\frac{H(p)+b}{2M}+a$ and $|b|\leq 10^{-5}+\psi_c(n)$.
\end{lemma}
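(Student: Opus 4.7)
The plan is to decompose the claim into three steps: (i) show that the outputs $Y_{N_1},Y_{N_2},\ldots$ are i.i.d.\ Bernoulli by arguing that the algorithm operates in statistically identical and independent \emph{rounds}; (ii) identify the Bernoulli parameter as $\theta=\E[\theta_{N_X}]$ via the tower property; and (iii) bound the bias of $\E[\overline{C}_{N_X}]$ around $-H(p)$ by combining Lemma~\ref{lem:biasmall} with the truncation bound from Lemma~\ref{lem:Mor_bias}.

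For step (i), I would isolate a single round as the block of samples consumed between two consecutive resets of $C_N$ to state $1$ (with $C_{N_X}$ also reset and a fresh $x_{\text{test}}$ drawn). Each round begins from a deterministic initial local state and consumes a fresh, disjoint block of i.i.d.\ samples from $p$, so the per-round tuple $(X, N, N_X, C_{N_X})$ and the independent Bernoulli coin used to generate the bias-machine input are independent across rounds and identically distributed. Thus $Y_{N_1},Y_{N_2},\ldots$ are i.i.d. The only subtlety is to confirm that the ``overflow'' branch (when $C_{N_X}$ reaches $2M$ before $C_N$ reaches $M$, in which case no input is fed to the bias machine and both counters are reset) does not break the i.i.d.\ structure: since in that event no $Y_{N_i}$ is emitted, we simply condition on successful rounds, whose distribution is unchanged from round to round by time-invariance.

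For step (ii), conditionally on $(X,N,N_X,C_{N_X})$ within a round, the input is $Y\sim\Bern(\theta_{N_X})$ with $\theta_{N_X}=a-\overline{C}_{N_X}/(2M)$. By the tower property the unconditional marginal is $\Bern(\theta)$ with $\theta=a-\E[\overline{C}_{N_X}]/(2M)$. For step (iii), set $b\triangleq -\E[\overline{C}_{N_X}]-H(p)$, so that $\theta = a + (H(p)+b)/(2M)$. Lemma~\ref{lem:biasmall} gives $|\E[\overline{C}_{N_X}^\infty]+H(p)|\leq 10^{-5}+\phi_c(n)$; since $\overline{C}_{N_X}-\overline{C}_{N_X}^\infty=C_{N_X}-C_{N_X}^\infty$ (the offsets $\mu+\E\log N$ cancel), Lemma~\ref{lem:Mor_bias} gives $|\E[\overline{C}_{N_X}]-\E[\overline{C}_{N_X}^\infty]|\leq n^{-c}\cdot\frac{100(c\log n+2)}{(1-0.5n^{-c})^2}$. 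A triangle inequality yields $|b|\leq 10^{-5}+\phi_c(n)+n^{-c}\cdot\frac{100(c\log n+2)}{(1-0.5n^{-c})^2}=10^{-5}+\psi_c(n)$, as desired.

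I expect the main obstacle to be a careful formalization of the round-based independence, and in particular verifying that $\theta_{N_X}\in[0,1]$ deterministically (so that $\Bern(\theta_{N_X})$ is well-defined), which follows from $1\le C_{N_X}\le 2M$ and the choice of $a=1-(\mu+\eta)/(2M)$—with the proviso that $\eta$ is an offline Monte-Carlo estimate of $\E[\log N]$, so a negligible additional discrepancy must be absorbed (or one must argue $\eta$ is taken to arbitrary precision, as the algorithm description does). Everything else is bookkeeping with the triangle inequality and the previously established lemmas.
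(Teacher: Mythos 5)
Your proposal is correct and takes essentially the same route as the paper's proof: i.i.d.-ness because each emitted input is a function of a fresh, disjoint block of the i.i.d.\ stream and of counters re-initialized at the start of the round, $\theta=\E[\theta_{N_X}]=a-\E[\overline{C}_{N_X}]/(2M)$ by the tower property, and $|b|\le 10^{-5}+\phi_c(n)+n^{-c}\cdot\frac{100(c\log n+2)}{(1-0.5n^{-c})^2}=10^{-5}+\psi_c(n)$ by combining Lemma~\ref{lem:biasmall} with Lemma~\ref{lem:Mor_bias} via the triangle inequality. Your additional remarks (the explicit round-based formalization, the range check $\theta_{N_X}\in[0,1)$, and the overflow/Monte-Carlo caveats) merely spell out details the paper's terser argument leaves implicit.
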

\begin{proof}
    The sequence of samples is i.i.d. since each sample is a function of the i.i.d. series $\{X_i\}_{i=1}^{\infty}$ and the statistics of the Morris counters, which are initialized at every incrementation. Given $(X,N,N_X,C_{N_X})=(x,m,n_x,C_{N_x})$, we set the Bernoulli parameter $\theta_{N_x}=a-\frac{ \overline{C}_{N_x}}{2M}$. Thus the unconditioned parameter $\theta$ is a mixture of $\theta_{N_X}$ over the joint distribution of $(X,N,N_X,C_{N_X})$, that is,
    \begin{align}
     \theta=\E(\theta_{N_X})=a-\frac{\E(\overline{C}_{N_x})}{2M}
     =a+\frac{H(p)+b}{2M},
    \end{align}
    where we used Lemma~\ref{lem:biasmall} and Lemma~\ref{lem:Mor_bias}.
\end{proof}
\begin{lemma}\label{lem:bias_mach2}
   We have
  \begin{align}
    \Pr(|\hat{H}-(H(p)+b)|>\beta)\leq\delta.
  \end{align}
\end{lemma}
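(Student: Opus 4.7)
The plan is to reduce the claim to a bound on the mean-squared error of the bias estimation machine and then apply Markov's inequality. From the definition $\hat{H} = 2M(\hat{\theta}-a)$ and the identity $\theta - a = (H(p)+b)/(2M)$ established in Lemma~\ref{lem:bias_mach}, the event $\{|\hat{H}-(H(p)+b)| > \beta\}$ coincides with $\{|\hat{\theta}-\theta| > \beta/(2M)\}$. So it suffices to show that this latter deviation event has probability at most $\delta$.

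Next, I would invoke Lemma~\ref{lem:biasMSE} applied to the bias estimation machine of size $S_{\text{bias}}$ that the algorithm feeds with the i.i.d.\ $\mathsf{Bern}(\theta)$ stream guaranteed by Lemma~\ref{lem:bias_mach}. This yields $\lim_{t\to\infty}\E(\hat{\theta}-\theta)^2 \leq 1/(S_{\text{bias}}-1)$. Note that the limit matches the $\limsup_{t\to\infty}$ in the definition of $\Pe$, so passing to the limit in $t$ is legitimate on the right-hand side of~\eqref{eq:pefd}.

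Then, by Markov's inequality applied to $(\hat{\theta}-\theta)^2$,
\begin{align}
\Pr\!\left(|\hat{\theta}-\theta| > \frac{\beta}{2M}\right) \;\leq\; \frac{4M^2}{\beta^2}\cdot\frac{1}{S_{\text{bias}}-1}.
\end{align}
Plugging in $S_{\text{bias}} = \lceil 4M^2/(\beta^2\delta)\rceil + 1$, so that $S_{\text{bias}}-1 \geq 4M^2/(\beta^2\delta)$, the right-hand side is at most $\delta$, which completes the proof.

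There is no real obstacle here; the only subtlety worth double-checking is that the i.i.d.\ hypothesis required by Lemma~\ref{lem:biasMSE} is valid, which is precisely the content of Lemma~\ref{lem:bias_mach}, and that the asymptotic MSE is achieved in the $t\to\infty$ limit that appears in the definition of $\Pe$. Everything else is a one-line Chebyshev computation.
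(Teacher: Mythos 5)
Your proposal is correct and follows essentially the same route as the paper: both use the MSE bound $\E(\hat{\theta}-\theta)^2\leq 1/(S_{\text{bias}}-1)$ from Lemma~\ref{lem:biasMSE} (with the i.i.d.\ $\mathsf{Bern}(\theta)$ input guaranteed by Lemma~\ref{lem:bias_mach}), rescale by $2M$, and conclude by Chebyshev/Markov with the choice $S_{\text{bias}}=\lceil 4M^2/(\beta^2\delta)\rceil+1$. The only cosmetic difference is that you phrase the deviation event in terms of $\hat{\theta}-\theta$ while the paper works directly with $\hat{H}-(H(p)+b)$; the computation is identical.
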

\begin{proof}
Recall that $\hat{H}=2M(\hat{\theta}-a)$, and that $\E(\hat{\theta}-\theta)^2\leq \frac{1}{S_{\text{bias}}-1}$. As $S_{\text{bias}}=\left\lceil\frac{4M^2}{\beta^2\delta}\right\rceil+1$, we have 
\begin{align}
    \E(\hat{H}-(H(p)+b))^2=4M^2\cdot \E(\hat{\theta}-\theta)^2\leq \beta^2\delta,
\end{align}
thus, from Chebyshev's inequality,
\begin{align}
    \Pr(|\hat{H}-(H(p)+b)|>\beta)\leq \frac{\E(\hat{H}-(H(p)+b))^2}{\beta^2}\leq\delta.
\end{align}
\end{proof}
Note that our upper bound on the additive error in estimation of $H(p)$ is $\beta+|b|\leq  \beta+10^{-5}+\psi_c(n)$, which limits our results to estimation error $\veps>10^{-5}+\psi_c(n)$.  

\subsection{Analysis of the algorithm for $t<\infty$}
In the previous analysis, the number of observed samples was assumed to be unbounded. In practice we only need to observe $O(t_{\mathsf{mix} }(\theta))$ samples, where $t_{\mathsf{mix} }(\theta)$ is the mixing time of our machine whenever the input is $\mathop{\mathrm{Bern}}(\theta)$ samples, i.e., the minimal time it takes for the total variation distance between the marginal distribution and the limiting distribution to be small. Lemma~\ref{lem:mix} and Lemma~\ref{lem:total_run} bound the number of samples needed at the Morris counting phase and characterize the mixing time of the bias estimation machine, respectively. Combining the previous results, Lemma~\ref{lem:sample_comp} shows that the total run time of the algorithm needed to obtain an $\veps$ additive approximation of the entropy with probability at least $1-3\delta$ is as prescribed by Theorem~\ref{thm:upper_bound}. 

Specifically, recall that the bias estimation machine is only incremented after an iteration of the first Morris counter is completed, and the run time of each iteration is a random variable that is only bounded in expectation. We note that this in fact implies the existence of a good algorithm that has a bounded sample complexity; namely, running our entropy estimation algorithm on $L$ samples is equivalent to running the bias estimation machine from~\cite{samaniego1973estimating} on a random number of samples $k=k(L)$ times with $\theta=\E(\theta_{N_X})$. The randomness in $k(L)$ follows since the runtime $N_i$ of each iteration of the Morris counter procedure is a random variable. We use Chernoff's bound to upper bound the probability that $k(L)$ is small. This event is considered as an error in our analysis. We now upper bound the mixing time of the bias estimation machine from~\cite{samaniego1973estimating}. Whenever $k(L)$ is greater than this mixing time, the error of our algorithms with $L$ samples is close to its asymptotic value.

To upper bound the mixing time, we use the \textit{coupling method}. Recall that the transition matrix $P$ of a Markov process $\{X_t\}_{t=0}^{\infty}$ supported on $\mathcal{X}$ is a matrix whose elements are $\Pr(X_{t+1}=x'| X_{t}=x)=P(x,x')$, for any  $x,x'\in\mathcal{X}\times \mathcal{X}$. 
We define a coupling of Markov chains with transition matrix $P$ to be a
process $\{X_t,Y_t\}_{t=0}^{\infty}$ with the property that both $\{X_t\}_{t=0}^{\infty}$ and $\{Y_t\}_{t=0}^{\infty}$ are Markov chains
with transition matrix $P$, although the two chains may be correlated and have different
initial distributions. Given a Markov chain on $\mathcal{X}$ with transition matrix $P$, a \textit{Markovian coupling}
of two $P$-chains is a Markov chain $\{X_t,Y_t\}_{t=0}^{\infty}$ with state space $\mathcal{X}\times \mathcal{X}$, which
satisfies, for all $x,y,x',y'$,
\begin{align}
  \Pr(X_{t+1}=x'| X_{t}=x,Y_{t}=y)&=P(x,x')\\\Pr(Y_{t+1}=y'| X_{t}=x,Y_{t}=y)&=P(y,y').
\end{align}
Let $P^t(x_0)$ be the marginal distribution of the chain at time $t$ when initiated at $x_0$, and let $\pi$ be the unique stationary distribution. Define the $\delta$-mixing time as
\begin{align}
    t_{\delta}^* \triangleq \min \{t: \dtv(P^t(x_0),\pi)\leq \delta\},
\end{align}
and $t_{\mathsf{mix} }\triangleq t_{1/4}^*$. We now show that the bias estimation machine with $S$ states mixes in $\Theta(S\log S)$ time, uniformly for all $\theta\in (0,1]$.
\begin{lemma}\label{lem:mix}
   Let $t_{\mathsf{mix} }(p)$ denote the mixing time of the bias estimation machine with $S$ states when the input is i.i.d. $\Bern(p)$, and
   define the \textit{worst-case} mixing time to be $t^*=\max_{p\in (0,1]}t_{\mathsf{mix} }(p)$. Then 
   \begin{align}
      \ln (2)\cdot (S-1)\log (S-1)\leq t^*\leq  4S\log S. 
   \end{align}
\end{lemma}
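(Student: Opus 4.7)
The plan is to handle the two bounds separately: a monotone Markovian coupling for the upper bound, and a coupon-collector style hitting-time obstruction for the lower bound.

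\textbf{Upper bound.} I would realize both chains on a common probability space: at each step draw one Bernoulli input $X'_t\sim\Bern(p)$ and one uniform $U_t\sim\mathrm{Unif}[0,1]$, and let each chain use the internal coin $B^{(i)}_t=\mathbf{1}[U_t\leq (s_i-1)/(S-1)]$ corresponding to its current state $s_i$. A short case analysis on $X'_t\in\{0,1\}$ verifies that the coupling is monotone ($X_0\leq Y_0$ implies $X_t\leq Y_t$ for all $t$), and that the gap $D_t=Y_t-X_t$ decreases by exactly one precisely when $U_t\in((X_t-1)/(S-1),(Y_t-1)/(S-1)]$, an event of probability $D_t/(S-1)$ that is independent of $X'_t$. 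Hence $(D_t)$ is a pure death chain with rate $k/(S-1)$ from state $k$, so starting from the extremal pair $(1,S)$ the coupling time $T$ satisfies $\E[T]\leq\sum_{k=1}^{S-1}(S-1)/k=(S-1)H_{S-1}$. Combining the coupling inequality $\dtv(\mathrm{Law}(X_t),\pi)\leq\Pr(T>t)$ with Markov's inequality and the elementary estimate $H_{S-1}\leq\log S$ then yields $t^*\leq 4(S-1)H_{S-1}\leq 4S\log S$ uniformly in $p$.

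\textbf{Lower bound.} Here I would single out a ``hard'' parameter $p$, namely $p\geq 1-1/(8S)$, so that the (shifted) Binomial stationary distribution satisfies $\pi(\{S\})=p^{S-1}\geq 7/8$. Initializing the chain at state $1$, let $\tau_S$ denote the hitting time of state $S$. A coupling with the pure-birth chain obtained by forcing $X'_t\equiv 1$ (while sharing $U_t$) keeps the pure-birth chain's state pointwise above the true chain's state, so $\tau_S$ stochastically dominates $\tau_S^{(1)}=\sum_{k=1}^{S-1}G_k$ with independent $G_k\sim\mathrm{Geo}((S-k)/(S-1))$. This is exactly the classical coupon collector time on $N=S-1$ coupons, with mean $(S-1)H_{S-1}$ and the well-known Gumbel limit $(\tau_S^{(1)}-(S-1)\ln(S-1))/(S-1)\xrightarrow{d}\mathrm{Gumbel}$, giving $\Pr(\tau_S\leq (S-1)\ln(S-1))\to e^{-1}<1/2$ as $S\to\infty$. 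Plugging in $t=\ln(2)(S-1)\log(S-1)=(S-1)\ln(S-1)$, for $S$ sufficiently large one obtains
\begin{align*}
\dtv(\mathrm{Law}_1(X_t),\pi)\geq \pi(\{S\})-\Pr_1(\tau_S\leq t)\geq \tfrac{7}{8}-\tfrac{1}{2}>\tfrac{1}{4},
\end{align*}
which forces $t_{\mathsf{mix}}(p)>t$ and hence $t^*\geq \ln(2)(S-1)\log(S-1)$.

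\textbf{Main obstacle.} The upper bound is essentially routine once monotonicity is checked. The subtle point is pinning down the sharp constant $\ln 2$ in the lower bound: a Chebyshev argument using $\mathrm{Var}(\tau_S^{(1)})=O((S-1)^2)$ already yields the correct $(S-1)\log(S-1)$ order but loses the constant, and recovering the sharp $\ln 2$ requires either the coupon-collector Gumbel asymptotic (asymptotic in $S$) or a careful direct large-deviation estimate on a sum of independent geometrics with variable parameters. For small $S$ the claimed inequality is essentially trivial since $t^*\geq 1$ dwarfs $\ln(2)(S-1)\log(S-1)$.
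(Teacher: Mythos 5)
Your upper bound is essentially the paper's own argument. The paper also constructs a monotone Markovian coupling (it merges the $\Bern(p)$ input and the internal coin into a single uniform $U_t$, whereas you share the input and use the common uniform only for the internal coin -- a cosmetic difference; both give the same transition law), observes that the gap $D_t=Y_t-X_t$ is non-increasing and decreases by exactly one with probability $D_t/(S-1)$, identifies the coalescence time from the extremal pair $(1,S)$ with a coupon-collector time of mean $(S-1)\sum_{k=1}^{S-1}k^{-1}\le S\log S$, and then invokes the coupling theorem it cites ($t_{\mathsf{mix}}\le 4\max\E[\tau_{\mathsf{couple}}]$), which is precisely your ``coupling inequality plus Markov'' step. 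For the lower bound your route genuinely differs: the paper simply sets $p=1$, notes that the chain is then literally the coupon-collector process on $S-1$ coupons, and lower-bounds the \emph{expected} collection time by $(S-1)\ln(S-1)=\ln 2\cdot(S-1)\log(S-1)$; you instead take $p\ge 1-1/(8S)$ so that $\pi(\{S\})\ge 7/8$, lower-bound $\dtv(\mathrm{Law}_1(X_t),\pi)\ge\pi(\{S\})-\Pr_1(\tau_S\le t)$, dominate $\tau_S$ from below by the coupon-collector time via a second coupling, and control its lower tail through the Gumbel limit. Your version is actually more faithful to the quantile definition of $t_{\mathsf{mix}}$ (the paper's proof silently equates the mixing time with an expected hitting time), but it buys this rigor at the price of being asymptotic in $S$.

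The one concrete flaw is your small-$S$ patch: $t^*\ge 1$ does not dwarf $\ln 2\cdot(S-1)\log(S-1)$ -- already at $S=10$ the right-hand side is about $20$ -- so as written your lower bound is only established for $S$ large enough for the Gumbel approximation to give $\Pr(\tau_S^{(1)}\le (S-1)\ln(S-1))\le 1/2$. To cover all $S$ you would need a non-asymptotic lower-tail estimate for the coupon collector (e.g., a second-moment or inclusion-exclusion bound on the number of uncollected coupons at time $(S-1)\ln(S-1)$), or else fall back on the paper's expectation computation at $p=1$, accepting its implicit mean-versus-quantile gap.
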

\begin{proof}
The transition probabilities of the bias estimation machine of Figure~\ref{fig:samaniego} are given, for $1<k<S$, as
\begin{align}
    X_{t+1}|_{X_{t}=k}=\begin{cases}
        k+1,& \textit{w.p. } \frac{S-k}{S-1}\cdot p, \\k,& \textit{w.p. } \frac{k-1}{S-1}\cdot p+\frac{S-k}{S-1}\cdot q, \\k-1,& \textit{w.p. } \frac{k-1}{S-1}\cdot q,\end{cases}
\end{align}
and for the extreme states $\{1,S\}$ as
\begin{align}
    X_{t+1}|_{X_{t}=1}=\begin{cases}
        2,& \textit{w.p. } p, \\1,& \textit{w.p. } q,\end{cases}
   \hspace{4mm} X_{t+1}|_{X_{t}=S}=\begin{cases}
        S,& \textit{w.p. } p, \\S-1,& \textit{w.p. } q.\end{cases}    
\end{align}
We construct a Markovian coupling in which the two chains stay together at all times after their first simultaneous
visit to a single state, that is
\begin{align}
    \text{if } X_s=Y_s \text{ then } X_t=Y_t \text{ for all } t\geq s. \label{eq:couple}
\end{align}
The following theorem is due to~\cite{levin2017markov}(Theorem 5.4), will give us an upper bound on the mixing time using this coupling. 
\begin{theorem}
 Let $\{(X_t,Y_t)\}$ be a Markovian coupling satisfying~\eqref{eq:couple}, for which $X_0=x_0$ and $Y_0=y_0$. Let $\tau_{\mathsf{couple}}$ be the coalescence time of the chains, that is,
 \begin{align}
    \tau_{\mathsf{couple}}\triangleq \min \{t:X_t=Y_t\}. 
 \end{align}
 Then
\begin{align}
    t_{\mathsf{mix} }\leq 4 \max_{x_0,y_0\in \mathcal{X}}\E(\tau_{\mathsf{couple}}).
\end{align}    
\end{theorem}\label{thm:couple}
Assume w.l.o.g. that $x_0<y_0$ and let $U_t$ be an i.i.d. sequence drawn according to the $ \mathop{\mathrm{Unif}}(0,1)$ distribution.  We construct a coupling on $(X_t,Y_t)$ such that, at each time point $t<\tau_{\mathsf{couple}}$, $X_t$ and $Y_t$ are incremented in the following manner:
\begin{align}
    X_{t+1}|_{X_{t}=i}=\begin{cases}
        i+1,& \textit{if  }U_t\leq \frac{S-i}{S-1}\cdot p, \\i,& \textit{if  } \frac{S-i}{S-1}\cdot p\leq U_t\leq 1-\frac{i-1}{S-1}\cdot q, \\i-1,& \textit{if  } 1-\frac{i-1}{S-1}\cdot q\leq U_t \leq 1,\end{cases}
\end{align}
and
\begin{align}
    Y_{t+1}|_{Y_{t}=j}=\begin{cases}
        j+1,& \textit{if  }U_t\leq \frac{S-j}{S-1}\cdot p, \\j,& \textit{if  } \frac{S-j}{S-1}\cdot p\leq U_t\leq 1-\frac{j-1}{S-1}\cdot q, \\j-1,& \textit{if  } 1-\frac{j-1}{S-1}\cdot q\leq U_t \leq 1.\end{cases}
\end{align}
One can validate that the transition probabilities are the correct ones, for example
\begin{align}
 \Pr(X_{t+1}=i|X_{t}=i)&=\Pr\left( \frac{S-i}{S-1}\cdot p\leq U_t\leq 1-\frac{i-1}{S-1}\cdot q\right)\\&=1-\frac{i-1}{S-1}\cdot q- \frac{S-i}{S-1}\cdot p\\&=\frac{i-1}{S-1}\cdot p+\frac{S-i}{S-1}\cdot q,
\end{align}
and, similarly, $\Pr(Y_{t+1}=j|Y_{t}=j)=\frac{j-1}{S-1}\cdot p+\frac{S-j}{S-1}\cdot q$. The other transition probabilities are easily calculated. Note that $i<j$ implies $\frac{S-j}{S-1}<\frac{S-i}{S-1}$, thus $Y_t$ cannot move right unless $X_t$ moves right and $X_t$ cannot move left unless $Y_t$ moves left. Moreover, since $x_0<y_0$, we have $i<j$ for all $t<\tau_{\mathsf{couple}}$. This follows from construction, since $\frac{S-i}{S-1}\cdot p$ is always smaller than $1-\frac{j-1}{S-1}\cdot q$, implying that $X_t$ cannot jump over $Y_t$ when they are one-state apart. 
Thus, the \textit{distance} process $D_t\triangleq Y_t-X_t$, is a non-increasing function of $t$, with initial state $D_0=y_0-x_0$, that can only decrease by one unit at a time or stay unchanged. We have
\begin{align}
    \Pr(D_{t+1}=D_t-1)&=\Pr(X_{t+1}=X_t+1,Y_{t+1}=Y_t)+\Pr(Y_{t+1}=Y_t-1,X_{t+1}=X_t)\\&=\Pr\left(\frac{S-Y_t}{S-1}\cdot p \leq U_t \leq \frac{S-X_t}{S-1}\cdot p\right)+ \Pr\left(1-\frac{Y_t-1}{S-1}\cdot q \leq U_t \leq 1-\frac{X_t-1}{S-1}\cdot q\right)\\&=\frac{Y_t-X_t}{S-1}\cdot p+\frac{Y_t-X_t}{S-1}\cdot q\\&=\frac{D_t}{S-1}.
\end{align}
The expected coupling time is now the expected time it takes for $D_t$ to decrease from $D_0$ to $D_t$, thus in order to maximize it under the given coupling, we need to maximize $D_0$, which corresponds to setting $X_0=1,y_0=S$. For $D_0=S-1$, consider the process $M_t\triangleq D_0-D_t$, which is a non-decreasing function of $t$ that goes from $0$ to $S-1$ and has $\Pr(M_{t+1}=M_t+1)=\Pr(D_{t+1}=D_t-1)=\frac{D_t}{S-1}=1-\frac{M_t}{S-1}$. Then this process is no other than the \textit{Coupon Collector} process with $S-1$ coupons, and the expected coupling time in our chain in identical to the expected number of coupons collected until the set contains all $S-1$ types, which according to~\cite{levin2017markov}, Proposition 2.3., is
\begin{align}
    \E(\tau_{\mathsf{couple}})=(S-1)\cdot\sum_{k=1}^{S-1}\frac{1}{k}\leq (S-1)(\ln (S-1)+1)\leq S \log (S).\label{eq:mix_upper}
\end{align} 
To show that this upper bound is indeed tight, consider the case of $p=1$. In this case, the chain of Figure~\ref{fig:samaniego} is simply the Coupon Collector process with $S-1$ coupons, thus, letting $\tau$ be the (random) time it takes to collect all coupons, we have
\begin{align}
   \E(\tau)=(S-1)\cdot\sum_{k=1}^{S-1}\frac{1}{k}\geq \ln (2)\cdot (S-1)\log (S-1). 
\end{align} 
\end{proof}
From~\cite{levin2017markov}, Eq. (4.34), we have that the $\delta$-mixing time $t_{\delta}^*$ can be upper bounded in terms on the mixing time by
\begin{align}
   t_{\delta}^*\leq \left\lceil \log \left(\frac{1}{\delta}\right) \right\rceil \cdot t_{\mathsf{mix} }. \label{eq:del_mix_upper}
\end{align}
Let 
\begin{align}
    k\triangleq 4\left\lceil\log \left(\frac{1}{\delta}\right)\right\rceil\left(\frac{4(c\log n+2)^2}{\beta^2\delta}+1\right)\log \left(\frac{4(c\log n+2)^2}{\beta^2\delta}+1\right), 
\end{align}
and note that from equation~\eqref{eq:del_mix_upper}, Lemma~\ref{lem:mix}, and substituting $S_{\text{bias}}=\frac{4M^2}{\beta^2\delta}+1$, we have that the $\delta$-mixing time of the bias estimation machine is at most $ k$. Let $N_1,N_2,\ldots,N_k$ be the first $k$ i.i.d. Morris counter running times, which are all distributed as $N$ in the analysis from Section~\ref{sec:upper}. Lemma~\ref{lem:total_run} uses the concentration of $N$ to show that, with probability $1-\delta$, the number of samples we need to observe until the bias machine mixes is not large.
\begin{lemma}\label{lem:total_run}
  Let $m=4n^c \cdot \ln \left(\frac{5k}{\delta}\right)$. Then
   \begin{align}
       \Pr\left(\sum_{i=1}^kN_i>k\cdot m\right) \leq \delta.
   \end{align}
\end{lemma}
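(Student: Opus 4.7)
The plan is to reduce the tail bound on the sum $\sum_{i=1}^k N_i$ to a union bound on the individual $N_i$'s, leveraging the concentration result already established in Lemma~\ref{lem:probN}. The key observation is the deterministic containment of events
\begin{align}
\left\{\sum_{i=1}^k N_i > k\cdot m\right\} \subseteq \bigcup_{i=1}^k \{N_i > m\},
\end{align}
since if every $N_i$ is at most $m$ then their sum is at most $km$. Consequently, the task reduces to showing $\Pr(N_i > m) \leq \delta/k$ for each $i$ and then applying a union bound.

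To obtain the per-variable bound, I would invoke the second inequality of Lemma~\ref{lem:probN} with the choice $\alpha = \ln(5k/\delta)$. With this choice, the prescribed threshold $\alpha \cdot 4n^c$ coincides exactly with $m = 4n^c \cdot \ln(5k/\delta)$, so the hypothesis of the lemma is met, and we get
\begin{align}
\Pr(N_i > m) \leq 5 e^{-\alpha} = 5 e^{-\ln(5k/\delta)} = \frac{\delta}{k}.
\end{align}
Since the $N_i$ are i.i.d. (being the runtimes of independent reinitializations of the first Morris counter on disjoint sample blocks), the same bound applies uniformly across $i$.

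Combining the containment with the union bound then yields
\begin{align}
\Pr\left(\sum_{i=1}^k N_i > k\cdot m\right) \leq \sum_{i=1}^k \Pr(N_i > m) \leq k \cdot \frac{\delta}{k} = \delta,
\end{align}
which is the claimed inequality. There is no real obstacle here: the argument is essentially a two-line consequence of Lemma~\ref{lem:probN} once one observes the set containment, and the logarithmic factor $\ln(5k/\delta)$ in $m$ is precisely what is needed to absorb the factor of $5$ in the Chernoff tail and the $k$-fold union-bound loss simultaneously.
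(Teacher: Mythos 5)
Your proposal is correct and follows essentially the same route as the paper: invoke the second tail bound of Lemma~\ref{lem:probN} with $\alpha=\ln(5k/\delta)$ to get $\Pr(N_i>m)\leq\delta/k$, observe that the sum can exceed $km$ only if some $N_i$ exceeds $m$, and bound the probability of that event by $\delta$. The only cosmetic difference is that the paper bounds the latter probability by $1-\left(1-\frac{\delta}{k}\right)^k\leq\delta$ using independence, whereas you use a plain union bound, which gives the same conclusion without even needing independence.
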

\begin{proof}
Appealing to Lemma~\ref{lem:probN} we have $   \Pr(N>m)\leq \delta/k$. Consequently, the probability that at least one of the random variables $N_1,\ldots, N_k$ is greater than $m$ is at most $1-\left(1-\frac{\delta}{k}\right)^k\leq \delta$.
\end{proof}
We conclude with the following lemma, which connects Lemma~\ref{lem:mix} and Lemma~\ref{lem:total_run} to show that our entropy estimator performs well even if the number of input samples is limited to $\tilde{O}(n^c/\delta)$.
\begin{lemma}\label{lem:sample_comp}
  Let the algorithm of Theorem~\ref{thm:upper_bound} run on $L=k\cdot m$ samples, and output the estimate $\hat{H}_{M_L}$. Then with probability at least $1-3\delta$, $\hat{H}_{M_L}$ is within $\veps$-additive error from $H(p)$.   
\end{lemma}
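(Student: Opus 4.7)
The plan is to bound $\Pr(|\hat H(M_L) - H(p)| > \veps)$ by a union bound over three bad events, each of probability at most $\delta$. Denote by $K_L$ the (random) number of increments of the bias estimation machine that have occurred by time $L$, and set $A \triangleq \{K_L \geq k\}$, which is equivalent to $\{\sum_{i=1}^k N_i \leq k\cdot m\}$. Lemma~\ref{lem:total_run} directly yields $\Pr(A^c) \leq \delta$, so with probability at least $1-\delta$ the bias machine has been fed at least $k$ $\Bern(\theta)$ inputs by time $L$, where $\theta$ is the parameter identified in Lemma~\ref{lem:bias_mach}.

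Second, I would show that on $A$ the marginal law of $M_L$ is within total variation distance $\delta$ of the stationary distribution $\pi_\theta$ of the bias machine driven by $\Bern(\theta)$ inputs. This follows by combining Lemma~\ref{lem:mix} with the generic bound~\eqref{eq:del_mix_upper}: the choice
\[
k = 4\left\lceil \log(1/\delta) \right\rceil \left(\frac{4M^2}{\beta^2\delta} + 1\right)\log \left(\frac{4M^2}{\beta^2\delta} + 1\right)
\]
exceeds $\lceil \log(1/\delta) \rceil \cdot t_{\mathsf{mix}}(\theta)$ uniformly in $\theta$, so the distribution after $k$ updates is $\delta$-close in TV to $\pi_\theta$, and by contraction of TV under Markov kernels the same remains true after any number of updates $\geq k$.

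Third, Lemma~\ref{lem:bias_mach2} gives $\pi_\theta(B) \leq \delta$ for the event $B \triangleq \{|\hat H(M_L) - (H(p)+b)| > \beta\}$, so transferring this through the TV bound of the previous step yields $\Pr(B \mid A) \leq 2\delta$, and hence
\[
\Pr(B) \leq \Pr(A^c) + \Pr(B \mid A) \leq 3\delta.
\]
Combining Lemmas~\ref{lem:biasmall} and~\ref{lem:Mor_bias} gives $|b| \leq 10^{-5} + \psi_c(n)$, so on $B^c$ we have $|\hat H(M_L) - H(p)| \leq \beta + |b| \leq 10^{-5} + \beta + \psi_c(n) = \veps$, which is the desired guarantee.

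The main subtlety I expect is in the second step: conditioning on $A$ (a function of $N_1,\ldots,N_k$) could in principle perturb the joint law of the inputs $Y_{N_1},\ldots,Y_{N_k}$ to the bias machine, because each $Y_{N_i}$ is correlated with its own runtime $N_i$. The cleanest remedy is to analyze the equivalent variant that terminates either after exactly $k$ bias updates or after $L$ samples, whichever comes first: on $A$ it terminates after $k$ genuinely i.i.d. $\Bern(\theta)$ updates (by Lemma~\ref{lem:bias_mach}), so the mixing-time argument applies without any conditioning issue, and $\Pr(A^c)\leq \delta$ absorbs the discrepancy between this variant and the algorithm of Theorem~\ref{thm:upper_bound}.
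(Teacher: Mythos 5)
Your proposal is correct and follows essentially the same route as the paper: bound the probability of fewer than $k$ bias-machine increments via Lemma~\ref{lem:total_run}, invoke the mixing-time bound of Lemma~\ref{lem:mix} together with~\eqref{eq:del_mix_upper} to place the state distribution within total variation $\delta$ of the stationary law, and transfer the asymptotic guarantee of Lemma~\ref{lem:bias_mach2} (with $|b|\leq 10^{-5}+\psi_c(n)$ from Lemmas~\ref{lem:biasmall} and~\ref{lem:Mor_bias}), yielding $3\delta$ by a union bound. Your closing remark on the conditioning subtlety is a careful refinement that the paper's proof glosses over, but it does not alter the argument.
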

\begin{proof}
 Lemma~\ref{lem:total_run} implies that, with probability at least $1-\delta$, after observing $k\cdot m$ samples, the bias estimation machine has been incremented at least $k$ times. Recall that, by definition, after $t\geq t_{\delta}^*$ increments of the bias estimation machine, we have that $\dtv(P^t(x_0),\pi)\leq \delta$, and that our $S$-states entropy estimator has $\sum_{i\in \hat{H}_{\bar{\veps}}}\pi_i<\delta$, where $\hat{H}_{\bar{\veps}}=\{i\in [S]:|\hat{H}_i-H(p)|>\veps\}$. Thus, from a union bound, a fraction of $2\delta$ of the distribution $P^t(x_0)$ (at most) is supported on $\hat{H}_{\bar{\veps}}$. Putting it all together, we have that a finite-time algorithm that outputs an estimate $\hat{H}(M_L)$ after 
\begin{align}
  L=k\cdot m = \Omega \left(\frac{n^c\cdot \mathop{\mathrm{poly}}(\log n)}{\delta}\cdot \mathop{\mathrm{poly}}(\log (1/\delta))\right)  
\end{align}
will be $\veps$-far from the correct entropy with probability at most $3\delta$.~\footnote{Note that the probability that the second Morris counter saturates even once in $k$ iterations is less than $O \left(n^{-c}\cdot\mathop{\mathrm{poly}}(\log n)\frac{\mathop{\mathrm{poly}}(\log (1/\delta))}{\delta}\right)$, thus is negligible for any $\delta\gg n^{-c}$.} 
\end{proof}

\section{Lower Bounds}
In this section we prove Theorem~\ref{thm:lower_bound}. The $\Omega(n)$ bound is proved via reduction to uniformity testing. For the $\frac{\log n}{2\veps}$ bound, we use a simple quantization argument. Assume that $S<\frac{\log n}{2\veps}$. Then there must be two consecutive estimate values $\hat{H}_1,\hat{H}_2\in [0,\log n]$ such that $\hat{H}_2-\hat{H}_1> 2\veps$. This implies that $H=(\hat{H}_1+\hat{H}_2)/2$ has $|H-\hat{H}_1|=|H-\hat{H}_2|> \veps$. Thus, for this value of the entropy, we have $\Pr (|\hat{H}(M_t)-H|>\veps)=1$ for all $t\in \mathbb{N}$.

\subsection{Proof of the \texorpdfstring{$(1-2\sqrt{\veps\ln 2})n$}{Lg} bound}
An ($\veps, \delta$) uniformity tester can distinguish (with probability $0<\delta<1/2$) between the case where $p$ is uniform and the case where $p$ is $\veps$-far from uniform in total variation. Assume we have an ($\veps, \delta$) entropy estimator. Then we can obtain an ($\tilde{\veps}=\sqrt{\veps\ln 2}, \delta$) uniformity tester using the following protocol: the tester declares that $p$ is uniform if $\hat{H}>\log n - \veps$, and that $p$ is $\tilde{\veps}$-far from uniform if $\hat{H}<\log n - \veps$. We now argue that this is indeed an ($\tilde{\veps}, \delta$) uniformity tester, in which case the $(1-2\tilde{\veps}) n$ lower bound will follow immediately from the lower bound on uniformity testing of~\cite{berg2022memory}. If $p=u$, where $u$ is the uniform distribution over $[n]$, then $H(p)=\log n$ and $\hat{H}>\log n -\veps$ with probability at least $1-\delta$, so our tester will correctly declare ``uniform'' with probability at least $1-\delta$. If $\dtv(p,u)>\sqrt{\veps\ln 2}$, then from Pinsker's inequality (~\cite{cover2012elements}, Lemma $11.6.1$),
\begin{align}
    2\veps <\frac{2}{\ln 2} \dtv(p,u)^2\leq D(p||u)=\log n - H(p),
\end{align}
which implies $H(p)< \log n -2 \veps$ and $\hat{H}<\log n -\veps$ with probability at least $1-\delta$. Thus, our tester will correctly declare ``far from uniform'' with probability at least $1-\delta$.

\section{Memory Complexity of Mutual Information Estimation}
We extend our results to the problem of mutual information estimation. The upper bound follows by a slight tweaking of our entropy estimation machine, and the lower bound follows by noting the close relation between mutual information and joint entropy, and lower bounding the memory complexity of the latter. 
\subsection{Upper Bound achieving algorithm}
\begin{algorithm}
\caption{Mutual Information Estimation with Morris Counters}\label{alg:mut_est}
\begin{algorithmic}[1]
\Require  {A data stream} $(X_1,Y_1),(X_2,Y_2)\ldots \sim p_{XY}$,
{alphabet size} $n$, {alphabet size} $m$, {run time} $t$, {error probability} $\delta$, $\beta>0$ ,$c>1$, {constant $\mu$} 
\Ensure {Mutual Information estimate} $\hat{I}$\\
{Set} 
\begin{align}
& B\gets\min \{k\in \mathbb{N}:\lceil (nm)^c \rceil\leq 2^k\}, \hspace{2mm} M \gets B+1\\&\eta \gets \text{Monte Carlo estimate of } \E (\log N) \text{ for } N=\sum_{k=1}^{M-1}\tau_k, \text{ where } \tau_k\sim \text{Geo}(2^{-k})\\&a\gets \frac{2}{3}-\frac{\mu+\eta}{6M},\hspace{2mm} S_{\text{bias}}\gets\left\lceil\frac{36M^2}{\beta^2\delta}\right\rceil+1\\&C_N\gets 1, \hspace{2mm} C_{N_x},C_{N_y},C_{N_{xy}}\gets 1,\hspace{2mm} s\gets1 
\end{align}
\For {$i = 1,\ldots,t$}
\If{$C_{N}=1$} 
    \State $(x_{\text{test}},y_{\text{test}})\gets (X_i,Y_i)$
    \State $C_{N_x},C_{N_y},C_{N_{xy}}\gets 1$
\Else
\State $C_N=\text{IncrementMorrisCounter}(C_N)$
    \If{$X_i=x_{\text{test}}$}
    \State $C_{N_x} \gets \text{IncrementMorrisCounter}(C_{N_x})$
    \If{$Y_i=y_{\text{test}}$}
    \State $C_{N_y} \gets \text{IncrementMorrisCounter}(C_{N_y})$
    \State $C_{N_{xy}} \gets \text{IncrementMorrisCounter}(C_{N_{xy}})$
    \EndIf 
 \ElsIf{$Y_i=y_{\text{test}}$}
    \State $C_{N_y} \gets \text{IncrementMorrisCounter}(C_{N_y})$ 
\EndIf 
\If{$\max\{C_{N_x},C_{N_y},C_{N_{xy}}<2M\}$}
\If{$C_N=M$}
\State $C_{\text{MI}}\gets C_{N_x}+C_{N_y}-C_{N_{xy}}$
    \State $\theta_{N_{xy}} \gets a-\frac{C_{\text{MI}}-(\mu+\eta)}{6M}$
    \State $s \gets$ \text{IncrementBiasEstimation}($S_{\text{bias}},s,\theta_{N_{xy}}$)  
    \State $C_N\gets 1$
    \EndIf
    \Else
    \State $C_N\gets 1$
\EndIf
\EndIf
\EndFor
\State $\hat{\theta}_{\text{MI}}\gets \frac{s-1}{S_{\text{bias}}-1}$
    \State $\hat{I}\gets 6M(\hat{\theta}_{\text{MI}}-a)$
\end{algorithmic}
\end{algorithm}
\begin{enumerate}
\item The algorithm receives an \textit{accuracy parameter} $\beta>0$ and an \textit{overhead parameter} $c>1$.
    \item In each iteration of the algorithm we collect a fresh pair of samples $(X,Y)\in [n]\times [m]$ according to $p_{XY}$, and store their values.  
    Assuming the received sample is $x$, we proceed to estimate $\log (p_xp_y/p_{xy})$ based on more fresh samples. 
    \item We use \textit{four} Morris counters - one that approximates a clock, one that approximates a count for $x$ values, one that approximates a count for $y$ values, and one that approximates a count for the pair $(x,y)$. The first of these counters have $M=B +1$ states, where $B$ is the is the smallest integer $k$ such that$\lceil (n\cdot m)^c \rceil\leq 2^k$. This counter (denoted as $C_N$) approximates a clock that counts until $\lceil (n\cdot m)^c \rceil$ samples from the distribution are observed. The second, third and fourth counters run in parallel to the first one and approximate a counter for $x$, a counter for $y$, and a counter for the pair $(x,y)$, and we denote their outputs as $C_{N_x}$, $C_{N_y}$ and $C_{N_{xy}}$, respectively. These counters each have $2M$ states, to guarantee they do not exceed the first counter with high probability. In the event that any of them reaches state $2M$ \textit{before} the first counter, we draw a fresh sample and initialize all counters. 
  \item We define $C_{\text{MI}}=C_{N_x}+C_{N_y}-C_{N_{xy}}$, and let $\overline{C}_{\text{MI}}=C_{\text{MI}}-\mu-\E\log N$ be the centralized version of $C_{\text{MI}}$. This is an almost unbiased estimator for $-\log (p_xp_y/p_{xy})$.   
    \item We now increment a bias estimation machine with $S_{\text{bias}}=\left\lceil\frac{36M^2}{\beta^2\delta}\right\rceil+1$ states whose purpose is to simulate the expectation operation.  
    Specifically, each time the first Morris counter concludes a count, we generate a $\mathop{\mathrm{Ber}}(\theta_{N_{xy}})$ random variable, with $\theta_{N_{xy}}=a-\frac{ \overline{C}_{\text{MI}}}{6M}$, and use it as the input to our bias estimation machine. The offset $a\triangleq \frac{4M-\E(\log N)-\mu}{6M}$ guarantees that $\theta_{N_{xy}}\in [0,1)$ with probability $1$, as $-2M\leq C_{\text{MI}}\leq 4 M$ since $C_{N_x},C_{N_y},C_{N_{xy}}$ are the outputs of Morris counters with $2M$ states. Our estimator for the mutual information $\hat{I}$ is the bias estimate of the machine, after subtraction of the known offset $a$ and multiplication by $6M$, that is, $\hat{I}=6M(\hat{\theta}-a)$.
\end{enumerate}

\subsection{Number of states of mutual information estimator}
$n,m,t,\beta,\delta,c,M,\eta,a,$ and $S_{\text{bias}}$ are program constants, so we do not count them in the memory consumption of the algorithm. At each time point, our algorithm keeps the value of a pair $(x,y)$, which requires $n\cdot m$ states, the state of the Morris counter approximating the clock, the state of the Morris counter approximating the logarithm of the $x$ counter, and the state of the bias estimation machine. Thus, the total number of states is the product of the individual number of states needed at each step, and recalling that $M=B+1\leq c\log nm+2$, the total number of states is
\begin{align}
    S\leq nm\cdot M\cdot (2M)^3 \cdot \left(\frac{36M^2}{\beta^2\delta}+2\right) = nm\left(\frac{288\cdot (c\log nm+2)^6}{\beta^2\delta}+16(c\log nm+2)^4\right).
\end{align}

\subsection{Analysis of the algorithm for $t=\infty$}
Let $(X,Y)$ be the fresh sample pair collected at the start of an algorithm iteration. 
We begin our analysis by showing that 
$\overline{C}_{\text{MI}}$ is close in expectation to $I(X;Y)$. Let $C_{N_X}^\infty,C_{N_Y}^\infty$ and $C_{N_{XY}}^\infty$ denote the corresponding infinite Morris counters. We first analyze the algorithm for these counters, and then appeal to Lemma~\ref{lem:Mor_bias} to bound the deviation of the limited memory counters used in our algorithm.
\begin{lemma}\label{lem:biasMIsmall}
 Denote $d_c(n,\alpha)=n^{-\alpha(c-1)+v_n(\alpha)}$ and let
 \begin{align}
 \phi_c(n,m)=2(e+1)\max\{d_c(n,1),d_c(m,1)\}+3\cdot\min\{1, C\cdot  \max\{d_c(n,1/2),d_c(m,1/2),d_c(nm,1/2)\}\}.
 \end{align}
We have
\begin{align}
  |\E (\overline{C}_{\text{MI}}^\infty) +I(X;Y)|\leq 3\cdot 10^{-5}+\phi_c(n,m).  
\end{align}   
\end{lemma}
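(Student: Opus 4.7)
The plan is to reduce the analysis to three separate applications of the single-counter analysis carried out in Lemmas~\ref{lem:biasmall}, \ref{lem:smallratio} and~\ref{lem:gammaB}, exploiting the additive structure of $C_{\text{MI}}$ and of $I(X;Y)=H(X)+H(Y)-H(X,Y)$.

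\textbf{Step 1: decomposition.} A direct algebraic identity shows
\begin{align}
\overline{C}_{\text{MI}}^\infty
  = \bigl(C_{N_X}^\infty-\mu-\E\log N\bigr)+\bigl(C_{N_Y}^\infty-\mu-\E\log N\bigr)-\bigl(C_{N_{XY}}^\infty-\mu-\E\log N\bigr),
\end{align}
since the two extra $(\mu+\E\log N)$ terms cancel. Taking expectations and using linearity, I only need to analyze the three summands separately.

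\textbf{Step 2: single-counter analysis, adapted.} For each of the variables $Z\in\{X,Y,(X,Y)\}$, conditionally on $(Z,N)$ the count $N_Z$ is distributed as $\mathop{\mathrm{Bin}}(N-1,p_Z)+1$ (for the pair we condition on $(X,Y,N)$ and use $p_{XY}$). The identical argument used in the proof of Lemma~\ref{lem:biasmall} then gives
\begin{align}
\E\bigl(\overline{C}_{N_Z}^\infty\bigr)
  = -H(Z) + \E\!\left[\log\frac{N_Z}{N\cdot p_Z}\right] + \E(\gamma_{N_Z}),
\end{align}
where $\gamma_{N_Z}=g(\log N_Z)+\phi(N_Z)$ as in Theorem~\ref{thm:morris}. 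Summing the three contributions with the correct signs yields
\begin{align}
\E(\overline{C}_{\text{MI}}^\infty) = -I(X;Y) + \Delta_{\mathrm{log}} + \Delta_\gamma,
\end{align}
where $\Delta_{\mathrm{log}}=\E\log\!\tfrac{N_X}{Np_X}+\E\log\!\tfrac{N_Y}{Np_Y}-\E\log\!\tfrac{N_{XY}}{Np_{XY}}$ and $\Delta_\gamma=\E(\gamma_{N_X})+\E(\gamma_{N_Y})-\E(\gamma_{N_{XY}})$.

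\textbf{Step 3: bounding the two error terms.} The clock $N$ is now designed with $M$ such that $(nm)^c\le 2^{M-1}$, so Lemma~\ref{lem:expN} (applied with $nm$ in place of $n$) controls $\E(N^{-\alpha})$. Repeating the derivation of Lemma~\ref{lem:smallratio} for each of $X$, $Y$, $(X,Y)$ and using $\E_X[1/p_X]=n$, $\E_Y[1/p_Y]=m$, $\E_{XY}[1/p_{XY}]=nm$, I obtain $0\le\E\log\tfrac{N_Z}{Np_Z}\le (e+1)d_c(|Z|,1)$ with $|Z|\in\{n,m,nm\}$ accordingly. Since the $(X,Y)$ term appears with a minus sign and $d_c(nm,1)\le\min\{d_c(n,1),d_c(m,1)\}$ up to the $v$-corrections, I get
\begin{align}
|\Delta_{\mathrm{log}}| \le 2(e+1)\max\{d_c(n,1),d_c(m,1)\},
\end{align}
because the two positive terms dominate the absolute value of the subtracted term. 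For $\Delta_\gamma$, I apply Lemma~\ref{lem:gammaB} to each of the three counters; the $g$ contributions sum to at most $3\cdot 10^{-5}$ (one per counter) and the $\phi$ contributions sum to at most $3\min\{1,C\max\{d_c(n,1/2),d_c(m,1/2),d_c(nm,1/2)\}\}$ by the same Jensen/Binomial argument used in~\eqref{eq:ENxinv}, with the $\E_Z[1/\sqrt{p_Z}]\le \sqrt{|Z|}$ bound in each case.

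\textbf{Step 4: conclusion.} Combining the bounds on $|\Delta_{\mathrm{log}}|$ and $|\Delta_\gamma|$ via the triangle inequality yields $|\E(\overline{C}_{\text{MI}}^\infty)+I(X;Y)|\le 3\cdot 10^{-5}+\phi_c(n,m)$ as claimed. The main obstacle is bookkeeping rather than a new idea: one has to keep track of the three different ``effective alphabet sizes'' $n$, $m$, $nm$, verify that the clock's polynomial depth $(nm)^c$ is sufficient to make Lemma~\ref{lem:expN} applicable uniformly across all three counters, and exploit the sign of the joint counter to shave the naive factor of $3$ down to the claimed factor of $2$ in front of the log-ratio term. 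Once this is done, the rest of the proof is a direct transcription of the single-counter analysis.
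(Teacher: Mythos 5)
Your proposal is correct and follows essentially the same route as the paper: the same additive decomposition of $\overline{C}_{\text{MI}}^\infty$ into the three centered counters, the same term-by-term application of Lemma~\ref{lem:smallratio} and Lemma~\ref{lem:gammaB} with effective alphabets $n$, $m$, $nm$, and the same observation that the subtracted joint term is dominated so only the factor $2(e+1)\max\{d_c(n,1),d_c(m,1)\}$ survives. Your explicit handling of the sign of $\gamma_{N_{XY}}$ and of the two-sided bound on the log-ratio difference is, if anything, slightly more careful than the paper's write-up.
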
 
\begin{proof}
  Following the proof of Lemma~\ref{lem:biasmall}, we write
\begin{align}
    \E(\overline{C}_{\text{MI}}^\infty\mid X,N_X)&=\E(C_{N_X}^\infty+C_{N_Y}^\infty-C_{N_{XY}}^\infty-\mu -\E(\log N)\mid X,N_X)\\&= \log N_X+\log N_Y-\log N_{XY}-\E(\log N)+\gamma_{N_{X|Y}},
\end{align}
 where $\gamma_{N_{X|Y}}=\gamma_{N_X}+\gamma_{N_Y}+\gamma_{N_{XY}}$. We then have
 \begin{align}
   \E(\overline{C}_{\text{MI}}^\infty\mid X) &=\E\left(\log \frac{N_XN_Y/N_{XY}}{N}\mid X\right)+\E(\gamma_{N_{X|Y}}\mid X) \\&=\log\frac{p_Xp_Y}{p_{XY}}+\E\left(\log \frac{N_XN_Y/N_{XY}}{N\cdot p_Xp_Y/p_{XY}}\mid X\right) +\E(\gamma_{N_X}\mid X)
 \end{align}
 implying that $    \E(\overline{C}_{\text{MI}}^\infty)=-I(X;Y)+\E\left(\log \frac{N_XN_Y/N_{XY}}{N\cdot p_Xp_Y/p_{XY}}\right)+\E(\gamma_{N_{X|Y}})$.
Decomposing 
\begin{align}
  \E\left(\log \frac{N_XN_Y/N_{XY}}{N\cdot p_Xp_Y/p_{XY}}\right)= \E\left(\log \frac{N_X}{N\cdot p_X}\right)+\E\left(\log \frac{N_Y}{N\cdot p_Y}\right)-\E\left(\log \frac{N_{XY}}{N\cdot p_{XY}}\right) 
\end{align}
and applying Lemma~\ref{lem:smallratio} to each term separately, we have
\begin{align}
-d_c(nm,1)\leq \E\left(\log \frac{N_XN_Y/N_{XY}}{N\cdot p_Xp_Y/p_{XY}}\right)\leq 2(e+1)\max\{d_c(n,1),d_c(m,1)\}, 
\end{align}
and, similarly, recalling Lemma~\ref{lem:gammaB}, we have that
\begin{align}
   \E(\gamma_{N_{X|Y}})\leq 3\cdot 10^{-5}+3\cdot\min\{1, C\cdot  \max\{d_c(n,1/2),d_c(m,1/2),d_c(nm,1/2)\}\}.
\end{align}
This implies that in the counting phase of the algorithm we obtain an estimate for (minus) the mutual information that has an average bias bounded from above by
    \begin{align}
       3\cdot 10^{-5}+\phi_c(n,m)=3\cdot 10^{-5}+
       O\left(\min\left\{2^{\sqrt{\log n}}\cdot n^{-\frac{1}{2}\cdot (c-1)}, 2^{\sqrt{\log m}}\cdot m^{-\frac{1}{2}\cdot (c-1)}\right\}\right). 
    \end{align}
\end{proof}
The additive expected error resulting from the truncation of the Morris counters $C_{N_X}^\infty,C_{N_Y}^\infty$ and $C_{N_{XY}}^\infty$ at state $2M$ is upper bounded according to Lemma~\ref{lem:Mor_bias} and the triangle inequality by $\frac{ 300(c\log nm+2)}{(nm)^c(1-0.5(nm)^{-c})^2}$, which is asymptotically negligible w.r.t $\phi_c(n,m)$. In a similar fashion to Lemma~\ref{lem:bias_mach}, the input sequence to the bias estimation machine is an i.i.d. sequence with distribution $\mathsf{Bern}\left(\theta_{\text{MI}}\right)$, where
\begin{align}
     \theta_{\text{MI}}=
     \E (\theta_{N_{XY}}) = \E \left(a-\frac{ \overline{C}_{\text{MI}}}{6M}\right)=a+\frac{I(X;Y)+b_{\text{MI}}}{6M}.
\end{align}
As we set $S_{\text{bias}}=\left\lceil\frac{36M^2}{\beta^2\delta}\right\rceil+1$ and $\hat{I}=6M(\hat{\theta}_{\text{MI}}-a)$, we have
\begin{align}
    \E(\hat{I}-(I(X;Y)+b_{\text{MI}}))^2=36M^2\cdot \E(\hat{\theta}_{\text{MI}}-\theta_{\text{MI}})^2\leq \beta^2\delta,
\end{align}
and we obtain the $(\veps,\delta)$ guarantee from Chebyshev's inequality, i.e., $\Pr(|\hat{I}-(I(X;Y)+b_{\text{MI}})|>\beta)\leq\delta$.   
\subsection{Lower Bound}
For simplicity of proof, let $\veps,\delta\geq \frac{1}{300}$, and recall that $\veps<\frac{1}{12\ln 2}$. 
Our lower bound from Theorem~\ref{thm:lower_bound} implies that for joint entropy estimation of $H(X,Y)$ where $(X,Y)\in[n] \times [m]$, the memory complexity is $\Omega (n\cdot m)$. Assume that we have a mutual information estimation machine that returns an estimate of $I(X;Y)$ with additive error at most $\veps$ with probability at least $1-\delta$ using $S_{\text{MI}}^*(n,m,\veps,\delta)$ states. We show below an algorithm that uses this machine as a black box and estimates $H(X,Y)=H(X)+H(Y)-I(X;Y)$ with additive error of at most $3\veps$ with probability at least $1-3\delta$ using $S_{\text{MI}}^*\cdot O(\log^3 n\cdot \log^3 m)$ states. Since estimation of $H(X,Y)$ requires $S^*(n\cdot m,3\veps,3\delta)=\Omega(n\cdot m)$, this must imply that
\begin{align}
S_{\text{MI}}^*(n,m,\veps,\delta)>\Omega\left(\frac{n\cdot m}{\log^3 n\cdot \log^3 m}\right).
\end{align}
We now describe such an algorithm. The algorithm has $3$ modes. It starts in mode $1$, in which $H(X)$ is estimated. It then moves to mode $2$, in which $H(Y)$ is estimated, and finally it moves to mode $3$ in which $I(X;Y)$ is estimated. The current mode is stored using $S_1=3$ states. The estimation of each of the $3$ quantities above is done using
\begin{align}
  \tilde{S}=\max\{S^*(n,\veps,\delta) ,S^*(m,\veps,\delta),S_{\text{MI}}^*(n,m,\veps,\delta)\}  
\end{align}
states. Those states are ``reused'' once the algorithm switches its mode of operation. The algorithm is as follows:
\begin{enumerate}
    \item Start in Mode $1$.
    \item Increment a Morris counter with  $S_2=O(\log\log n)$ states at each observation of $X$. This counter determines the run time of mode $1$ and we denote it \textit{RunMode}$X$.
    \item Estimate $H(X)$ using the Morris-counter entropy estimator we introduced in Section~\ref{sec:upper} with $S^*(n,\veps,\delta)$ states.
    \item As RunMode$X$ arrives at state $S_2$, save the estimate $\hat{H}(X)$ of $H(X)$ using $S_3=O(\log^2 n)$ states.
    \item Switch to Mode $2$.
    \item Increment a Morris counter with  $S_4=O(\log\log m)$ states at each observation of $Y$. This counter determines the run time of mode $2$ and we denote it \textit{RunMode}$Y$.
    \item Estimate $H(Y)$ using the entropy estimator with $S^*(n,\veps,\delta)$ states.
    \item As RunMode$Y$ arrives at state $S_4$, save the estimate $\hat{H}(Y)$ of $H(Y)$ using $S_5=O(\log^2 m)$ states.
    \item Switch to Mode $3$.
    \item Estimate $I(X;Y)$ using the black-box machine with $S_{\text{MI}}^*(n,m,\veps,\delta)$ states.
    \item From this time onward, estimate $H(X,Y)$ as $\hat{H}(X)+\hat{H}(Y)-\hat{I}(X;Y)$, where $\hat{I}(X;Y)$ is the current estimate of the black-box machine.
\end{enumerate}
The idea here is that, after a long enough time, the entropy estimator output will be accurate enough, at which point we can store that value and switch modes. In order to decide if enough time has passed, we must ensure that the bias estimation machine, which outputs our entropy estimates, is sufficiently mixed. From Lemma~\ref{lem:mix}, we have that the mixing time of the bias estimation machine is at most $4S_{\text{Bias}}\log S_{\text{Bias}}\leq O(\log^3 n)$ samples, as $S_{\text{Bias}}=O(\log^2 n)$ states. Thus, it suffices to run the machine for $\log^k n$ samples of independent $\mathrm{Ber}(\theta)$ random variables for $k\gg 1$ and then stop it, which would guarantee it is sufficiently mixed. In order to save memory we use another Morris counter with $S_2=O(\log \log^k n)=O(\log\log n)$ states that determines when the mode run ends. We then store the state of the bias estimation machine, which corresponds to our estimate $\hat{H}(X)$ of $H(X)$, using $S_3=S_{\text{Bias}}=O(\log^2 n)$ states. At this point, the algorithm switches to mode $2$, and estimates $H(Y)$ with $S^*(m,\veps,\delta)$ states. As in mode $1$, we use a Morris counter of $S_4=O(\log\log m)$ states to determine when the machine is sufficiently mixed and can be stopped, and store the state of the bias estimation machine, which corresponds to the estimate $\hat{H}(Y)$ of $H(Y)$ this time, using $S_5=O(\log^2 m)$ states. The process then moves to state $3$ where $I(X;Y)$ is estimated using the black-box machine and, subsequently, the machine estimates $H(X,Y)$ as $\hat{H}(X)+\hat{H}(Y)-\hat{I}(X;Y)$, where $\hat{I}(X;Y)$ is the current estimate of the black box machine. All in all, this algorithm produces a $(3\veps,3\delta)$ (recall that we assumed $\delta,\veps\geq 1/100$) estimate of $H(X,Y)$ using 
\begin{align}
S\leq\tilde{S}\prod_{i=1}^5 S_i=\tilde{S}\cdot O(\log^3 n\cdot\log^3 m),
\end{align}
which implies that
\begin{align}
  \tilde{S}= \Omega\left(\frac{S}{\log^3 n \log^3 m}\right)= \Omega\left(\frac{S^*(n,m,3\veps,3\delta)}{\log^3 n \log^3 m}\right)=\Omega\left(\frac{n\cdot m}{\log^3 n \log^3 m}\right) . 
\end{align} 
Finally, since Theorem~\ref{thm:upper_bound} states that $S^*(n,\veps,\delta)=O(n\cdot\log^4 n)$ and $S^*(m,\veps,\delta)=O(m\cdot\log^4 m)$, and we assumed that $\frac{n}{\log^3 n}=\Omega(\log^7 m)$ and $\frac{m}{\log^3 m}=\Omega(\log^7 n)$, we must therefore have that 
\begin{align}
   S_{\text{MI}}^*(n,m,\veps,\delta)=\Omega\left(\frac{n\cdot m}{\log^3 n\cdot \log^3 m}\right). 
\end{align}

\section{Conclusions and open problems}
Due to the limitation $\veps>10^{-5}$, our upper bound is not informative when very small additive error is required. Indeed, the Morris counter seems to be inadequate in these regimes and, despite many follow up works, we are not aware of an improved analysis that cancels out the $10^{-5}$ term of~\cite{flajolet1985approximate}. A natural question to ask then is whether this is a true limitation arising as a result of the bounded memory or an artifact of the Morris counter. This gives rise to two potential directions for future research:
\begin{itemize}
\item Is there a counting algorithm with the same memory consumption as the Morris counter that does not suffer from this bias? 
\item Can we find an entropy estimator with similar number of states without this lower bound on the attainable additive error?
\end{itemize}
  Another interesting research direction is to close the $ \mathop{\mathrm{poly}}(\log n)$ gap between our upper and lower bounds w.r.t the dependence on $n$. It seems plausible to us that the upper bound is tight, .i.e., that the real dependence on $n$ is $n\mathop{\mathrm{poly}}(\log n)$, as $n$ is the minimal number of states needed to save one sample, and we must save our running entropy estimate as well. One possible reason for this mismatch between the bounds might be that our lower bound relies on reduction to the uniformity testing problem, which does not fully utilize the properties of a finite-state entropy estimator. In particular, the reduction is from estimation to binary hypothesis testing (testing uniform vs. $\veps$-far from uniform), whereas in $\veps$-additive entropy estimation we effectively have $\frac{\log n}{2\veps}$ hypotheses.  Particularly, it would seem that the binary test of $H(p)=\log n$ vs. $H(p)\leq \log n -\veps$ is easier as there is only one distribution with $H(p)=\log n$ (uniform). Hence, another preliminary approach for lower bounds might be 
  \begin{itemize}
      \item Solve the binary hypothesis testing problem  $H(p)=\alpha\log n$ vs. $H(p)\leq \alpha\log n -\veps$ for some $0<\alpha<1$. 
  \end{itemize}      As there are many distributions with entropy $\alpha\log n$, and since solving this problem immediately implies a lower bound on entropy estimation, this approach might help in improve upon our lower bound. 
\section*{Acknowledgements}
This work was supported by the ISF under Grants 1641/21 and 1766/22.

\bibliography{EntropyPaper}
\bibliographystyle{ieeetr}

\section*{Appendix}\label{sec:app}
\subsection{Proof of Lemma~\ref{lem:biasMSE}}
     Let $p_{i,j}$ be the transition probability from state $i$ to state $j$, and let $\pi_k$ be the unique stationary distribution of state $k\in[S]$. We first show that $\pi_k=\mu_k$, where $\mu_k$ is the $\mathrm{Binomial}(S-1,\theta)$ distribution, that is, 
     \begin{align}
         \mu_k = \binom{S-1}{k-1}p^{k-1}q^{S-k}.\label{eq:stat_dist}
     \end{align} 
     For brevity, denote $\mathsf{Bin}_p^S(k)=\binom{S-1}{k}p^{k}q^{S-k-1}$. As $\binom{S-1}{k-1}=\frac{k}{S-k}\binom{S-1}{k}$, we have $\mathsf{Bin}_p^S(k-1)=\mathsf{Bin}_p^S(k)\cdot \frac{k}{S-k}\cdot\frac{q}{p}$. Recall that if $\mu$ is the stationary distribution if and only if $\sum_{i=1}^S\mu_ip_{i,k+1}=\mu_{k+1}$ for any $k\in [S-1]$. Write
     \begin{align}
       \sum_{i=1}^S\mu_i p_{i,k+1}&=\mu_{k}p_{k,k+1}+  \mu_{k+1}p_{k+1,k+1}+ \mu_{k+2}p_{k+2,k} \\&=\mathsf{Bin}_p^S(k-1)\cdot \frac{S-k}{S-1}p+\mathsf{Bin}_p^S(k)\left(\frac{k}{S-1}p+\frac{S-(k+1)}{S-1}q\right)+\mathsf{Bin}_p^S(k+1)\cdot \frac{k+1}{S-1}q\\&=\mathsf{Bin}_p^S(k)\left(\frac{k}{S-k}\cdot\frac{S-k}{S-1}q+\frac{k}{S-1}p+\frac{S-(k+1)}{S-1}q+\frac{S-(k+1)}{k+1}\cdot \frac{k+1}{S-1}p\right)\\&=\mathsf{Bin}_p^S(k)(p+q)=\mu_{k+1}.
     \end{align}
 Now, due to the Ergodicity of the chain, when the machine is initiated with $\mathop{\mathrm{Bern}}(p)$ samples and run for a long enough time, eq.~\eqref{eq:stat_dist} implies that $M_t-1$ is distributed $\mathrm{Binomial}(S-1,p)$, thus the estimate $\hat{p}(M_t)$ has $\E(\hat{p}(M_t))=p$ and $\E(\hat{p}(M_t)-p)^2=\Var(\hat{p}(M_t))=\frac{pq}{S-1}\leq\frac{1}{S-1}$.
\subsection{Monte Carlo guarantee}
\begin{lemma}
    Monte Carlo simulation provides an $\alpha$-additive estimation for $\E (\log N)$ with probability $1-\delta$ with $L=\frac{(c\log n +4)^2+1}{\alpha^2\delta}$ samples.
\end{lemma}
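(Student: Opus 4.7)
The plan is to use a direct Monte Carlo estimator combined with Chebyshev's inequality. Offline, I would draw $L$ i.i.d.\ copies $N_1,\ldots,N_L$ of $N=\sum_{k=1}^{M-1}\tau_k$ by independently sampling $\tau_k^{(i)}\sim\mathrm{Geo}(2^{-k})$ for each $i\in[L]$ and $k\in[M-1]$, and define $\hat\eta=\frac{1}{L}\sum_{i=1}^L\log N_i$. The $N_i$ are i.i.d.\ copies of $N$, so $\E\hat\eta=\E(\log N)$ and $\Var(\hat\eta)=\Var(\log N)/L$. Chebyshev's inequality then yields
\begin{align*}
\Pr\left(|\hat\eta-\E(\log N)|>\alpha\right)\leq\frac{\Var(\log N)}{L\alpha^2},
\end{align*}
so the claim reduces to showing $\Var(\log N)\leq(c\log n+4)^2+1$, since then the choice $L=((c\log n+4)^2+1)/(\alpha^2\delta)$ makes the right-hand side at most $\delta$.

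The main obstacle is the variance bound, which I expect to be where the work is. Since each $\tau_k\geq 1$ we have $N\geq 1$ and hence $\log N\geq 0$, so $\Var(\log N)\leq\E[(\log N)^2]=\int_0^\infty 2u\,\Pr(\log N>u)\,du$. Set $u_0\triangleq 2+c\log n$, chosen so that $2^{u_0}=4n^c$ and $M\leq u_0$. I would split the integral at $u_0$. On $[0,u_0]$ the trivial bound $\Pr(\log N>u)\leq 1$ contributes at most $u_0^2$. For $u\geq u_0$, apply the second assertion of Lemma~\ref{lem:probN} with $m=2^u$ and the corresponding parameter $\alpha=2^{u-u_0}\geq 1$ to obtain $\Pr(\log N>u)=\Pr(N>2^u)\leq 5\exp(-2^{u-u_0})$. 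Substituting $v=u-u_0$ turns the tail contribution into
\begin{align*}
10\int_0^\infty(v+u_0)e^{-2^v}\,dv=10u_0\int_0^\infty e^{-2^v}\,dv+10\int_0^\infty v\,e^{-2^v}\,dv,
\end{align*}
and the substitution $w=2^v$ reduces each integral to an exponential-integral constant. A numerical check gives $\int_0^\infty e^{-2^v}\,dv=E_1(1)/\ln 2<0.4$ and $\int_0^\infty v e^{-2^v}\,dv<0.5$, so the tail contributes at most $4u_0+5$. Combining the two pieces gives $\E[(\log N)^2]\leq u_0^2+4u_0+5=(u_0+2)^2+1=(c\log n+4)^2+1$, as required.

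Finally, I would remark that this Monte Carlo computation of $\eta$ is performed offline once, prior to deploying the streaming entropy estimator, so the $L$ samples used for this precomputation contribute neither to the memory footprint nor to the online sample complexity of the main algorithm; the constant $\eta$ is simply hard-coded into the finite-state machine afterwards.
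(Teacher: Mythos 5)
Your proof is correct, and its outer scaffolding (draw $L$ i.i.d.\ copies of $N$, apply Chebyshev, reduce everything to showing $\Var(\log N)\le\E[(\log N)^2]\le(c\log n+4)^2+1$) coincides with the paper's; where you genuinely diverge is in how that second moment is bounded. The paper argues deterministically: from the Taylor expansion of $-\log(1-x)$ it derives the pointwise inequality $\log N\le \log\E(N)+1+N/\E(N)$, squares it, and takes expectations using only $\E(N)=2^{M}-2$ and $\Var(N)\le\E^2(N)$, arriving at $\E[(\log N)^2]\le(M+2)^2+1\le(c\log n+4)^2+1$. You instead integrate the tail, $\E[(\log N)^2]=\int_0^\infty 2u\,\Pr(\log N>u)\,du$, split at $u_0=c\log n+2$ (so $2^{u_0}=4n^c$), and recycle the upper-tail Chernoff bound of Lemma~\ref{lem:probN}; since its proof actually yields $\Pr(N>m)\le 5e^{-m/(4n^c)}$ for every $m$, your application with $m=2^u$ and $\alpha=2^{u-u_0}\ge 1$ is legitimate, and your numerical bounds on the two doubly-exponential integrals ($E_1(1)/\ln 2<0.4$ and $\int_0^\infty v e^{-2^v}dv<0.5$) check out, giving exactly the same constant $(u_0+2)^2+1=(c\log n+4)^2+1$. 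Your route has the appeal of reusing an already-proved concentration lemma and avoiding the Taylor manipulation (and it never needs the auxiliary fact $\Var(N)\le\E^2(N)$, which the paper's last step implicitly relies on), at the price of invoking numerical values of exponential-integral constants; the paper's route is self-contained elementary algebra requiring only the first two moments of $N$. Both are valid and land on the claimed sample size $L=((c\log n+4)^2+1)/(\alpha^2\delta)$, and your closing remark that the simulation is offline and costs no machine states matches the paper's treatment of $\eta$ as a program constant.
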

\begin{proof}
Firstly, we have     
\begin{align}
        \E(N)=\sum_{k=1}^{M-1}\E(\tau_k)=\sum_{k=1}^{M-1} 2^k\leq2^{M},
    \end{align}
    and from Jensen's inequality $\E (\log N)\leq \log \E(N)\leq M$.
The proof follows from the Taylor series expansion of $\log N$.
\begin{align}
  \log N&= -\log \left(1-\frac{N-1}{N}\right)\\&=\sum_{k=1}^{\infty} \frac{1}{k}\left(\frac{N-1}{N}\right)^k\\&=\sum_{k=1}^{\E(N)-1} \frac{1}{k}\left(1-\frac{1}{N}\right)^k+\sum_{k=\E(N)}^{\infty} \frac{1}{k}\left(1-\frac{1}{N}\right)^k\\ &\leq \sum_{k=1}^{\E(N)-1} \frac{1}{k}+\frac{1}{\E(N)}\sum_{k=0}^{\infty}\left(1-\frac{1}{N}\right)^k\\ &\leq \log (E(N))+1+\frac{N}{E(N)}.
\end{align}
     Thus we can write
     \begin{align}
         \log^2 (N)\leq \log^2 (2E(N))+\frac{2N}{E(N)}\cdot \log (2E(N))+\frac{N^2}{E^2(N)},
     \end{align}
     and, taking expectation, we have 
     \begin{align}
         \E(\log^2 (N))&\leq \log^2 (2E(N)) +2\log (2E(N))+\frac{\Var(N)+\E^2(N)}{\E^2(N)}\\&\leq (M+1)^2+2(M+1)+2= (M+2)^2+1.
     \end{align}
Finally, from Chebyshev's inequality, we have
\begin{align}
    \Pr\left(\left|\frac{1}{L}\sum_{J=1}^L\log(N_j)-\E (\log N)\right|>\alpha\right)\leq \frac{\Var(\log N)}{L\alpha^2}\leq \frac{\E(\log^2 (N))}{L\alpha^2}\leq \frac{(c\log n +4)^2+1}{L\alpha^2},
\end{align}
thus taking $L=\frac{(c\log n +4)^2+1}{\alpha^2\delta}$ achieves the result.
\end{proof}
\subsection{Proof of Lemma~\ref{lem:Mor_bias}}
From total probability
\begin{align}
    \E(C_{N_X}^\infty)&=\Pr(C_{N_X}^\infty< 2M)\E(C_{N_X}^\infty\mid C_{N_X}^\infty< 2M)+\Pr(C_{N_X}^\infty\geq 2M)\E(C_{N_X}^\infty\mid C_{N_X}^\infty\geq 2M)\\&=\Pr(C_{N_X}^\infty< 2M)\E(C_{N_X})+\Pr(C_{N_X}^\infty\geq 2M)\E(C_{N_X}^\infty\mid C_{N_X}^\infty\geq 2M),
\end{align}
which implies
\begin{align}
   \E(C_{N_X}^\infty)-\Pr(C_{N_X}^\infty\geq 2M)\E(C_{N_X}^\infty\mid C_{N_X}^\infty\geq 2M)\leq\E(C_{N_X})\leq \frac{\E(C_{N_X}^\infty)}{\Pr(C_{N_X}^\infty< 2M)}.
\end{align}
Note that
\begin{align}
 \Pr(C_{N_X}^\infty=k\mid C_{N_X}^\infty\geq 2M)=\frac{\Pr(C_{N_X}^\infty=k)}{\Pr(C_{N_X}^\infty\geq 2M)}   
\end{align}
for $k\geq 2M$ and zero otherwise. Hence, 
\begin{align}
  \Pr(C_{N_X}^\infty\geq  2M)\E(C_{N_X}^\infty\mid C_{N_X}^\infty\geq  2M)=\sum_{k=2M}^\infty k\Pr(C_{N_X}^\infty=k).  
\end{align}
We thus have the upper bound
\begin{align}
 |\E(C_{N_X}) -\E(C_{N_X}^\infty)|\leq \max \left\{\sum_{k=2M}^\infty k\Pr(C_{N_X}^\infty=k),\hspace{1mm}\frac{\E(C_{N_X}^\infty)\Pr(C_{N_X}^\infty\geq 2M)}{\Pr(C_{N_X}^\infty< 2M)}\right\}.  
\end{align}
We first bound the second term. Write 
\begin{align}
  \E(C_{N_X}^\infty) &= \E(\log N_X) +\mu+\E(\gamma_{N_X})\label{eq:moThm}\\&\leq \log(\E(N_X))+\mu +1+10^{-5}\label{eq:lembound}\\&=\log(\E(N\cdot p_X))+\mu +1+10^{-5}  \\&<c\log n+3\label{eq:mubound},
\end{align}
where~\eqref{eq:moThm} follows from Theorem~\ref{thm:morris} and the smoothing theorem,~\eqref{eq:lembound} follows from Lemma~\ref{lem:gammaB} and Jensen's inequality, and~\eqref{eq:mubound} follows as $\mu \approx -0.3$ and from $\E(N)=\sum_{k=1}^{M-1} 2^k=2^{M}-2\leq 4n^c$. Now, note that if $C_{N_X}^\infty\geq 2M$ after $N=m$ samples, then the second counter must have moved from state $2M-1$ to state $2M$ in less than $m$ steps, implying that
\begin{align}
    \Pr(C_{N_X}^\infty\geq 2M\mid N=m)\leq m\cdot 2^{-(2M-1)}.  
\end{align}
Since $2^{-M}\leq 1/(2n^c)$, we have
\begin{align}
  \Pr(C_{N_X}^\infty\geq 2M)= \E(\Pr(C_{N_X}^\infty\geq 2M\mid N))\leq \E(N\cdot 2^{-(2M-1)})\leq 2n^{-c}. 
\end{align}
Combining the above, we get
\begin{align}
    \frac{\E(C_{N_X}^\infty)\Pr(C_{N_X}^\infty\geq 2M)}{\Pr(C_{N_X}^\infty< 2M)}\leq \frac{2(c\log n +3)}{n^c(1-2n^{-c})}.
\end{align}
We now proceed to carefully bound $\sum_{k=2M}^\infty k\Pr(C_{N_X}^\infty=k)$. For any $X=x$ we have
\begin{align}
  \sum_{k=2M}^\infty k\Pr(C_{N_x}^\infty=k)&=\sum_{m=1}^\infty \Pr(N=m)\sum_{n_x=1}^m\Pr(N_x=n_x\mid N=m) \sum_{k=2M}^\infty\Pr(C_{N_x}^\infty=k\mid N=m,N_x=n_x) \\&=\sum_{m=1}^\infty \Pr(N=m)\sum_{n_x=1}^m\Pr(N_x=n_x\mid N=m) \sum_{k=2M}^\infty\Pr(C_{n_x}^\infty=k\mid N=m).
\end{align}
We divide the computation into sample-state blocks, where each sample block is of length $\lceil 4n^c \rceil$ and each state block is of length $M$. Clearly $\Pr(N\in [\ell\cdot \lceil 4n^c \rceil,(\ell+1)\cdot \lceil 4n^c \rceil)\leq \Pr(N \geq \ell\cdot \lceil 4n^c \rceil)$, and $k\leq \alpha M$ in the interval $[(\alpha-1)M,\alpha M)$. Thus,
\begin{align}
 & \sum_{k=2M}^\infty k\Pr(C_{N_x}^\infty=k)\\\leq & \sum_{m=1}^\infty \Pr(N=m)\sum_{n_x=1}^m\Pr(N_x=n_x\mid N=m) \sum_{\alpha=2}^\infty(\alpha+1)M\max_{\alpha M \leq k\leq (\alpha+1)M}\Pr(C_{n_x}^\infty=k\mid N=m)\\\leq &\sum_{\ell=0}^\infty \Pr(N\geq \ell\cdot \lceil 4n^c \rceil) \sum_{\alpha=2}^\infty(\alpha+1)M\max_{\substack{\ell\cdot \lceil 4n^c \rceil\leq m\leq (\ell+1)\cdot \lceil 4n^c \rceil\\ 1\leq n_x\leq m\\ \alpha M \leq k\leq (\alpha+1)M}}\Pr(C_{n_x}^\infty=k\mid N=m).
\end{align}
First, we have from Lemma~\ref{lem:probN} that $\Pr(N\geq \ell\cdot \lceil 4n^c \rceil)\leq 5e^{-\ell}$. Now, note that if $C_{N_x}^\infty=k$, then the second counter must have moved from state $k-1$ to state $k$ in less than $n_x$ steps. Thus we have
\begin{align}
  \max_{\substack{\ell\cdot \lceil 4n^c \rceil\leq m\leq (\ell+1)\cdot \lceil 4n^c \rceil\\ 1\leq n_x\leq m\\ \alpha M \leq k\leq (\alpha+1)M}}\Pr(C_{N_x}^\infty=k\mid N=m)&\leq \max_{\substack{\ell\cdot \lceil 4n^c \rceil\leq m\leq (\ell+1)\cdot \lceil 4n^c \rceil\\ 1\leq n_x\leq m\\ \alpha M \leq k\leq (\alpha+1)M}}n_x\cdot 2^{-(k-1)}\\&\leq \max_{\substack{\ell\cdot \lceil 4n^c \rceil\leq m\leq (\ell+1)\cdot \lceil 4n^c \rceil\\ \alpha M \leq k\leq (\alpha+1)M}}m\cdot 2^{-(k-1)}\\&\leq (\ell+1)\cdot \lceil 4n^c \rceil \cdot 2^{-(\alpha M -1)} .
\end{align}
Plugging back the above, we have
\begin{align}
  \sum_{k=2M}^\infty k\Pr(C_{N_x}^\infty=k)&\leq 10M\cdot \lceil 4n^c \rceil  \left(\sum_{\ell=0}^\infty (\ell+1)e^{-\ell}\right) \left(\sum_{\alpha=2}^\infty (\alpha+1)2^{-\alpha M}\right)\\&\leq 10M\cdot \lceil 4n^c \rceil\cdot \frac{1}{(1-e^{-1})^2}\cdot\frac{3\cdot 2^{-2M}}{(1-2^{-M})^2}\\&< \frac{ 100(c\log n+2)}{n^c(1-0.5n^{-c})^2},
\end{align}
where we used the identity 
\begin{align}
  \sum_{n=N_1}^\infty n q^{n-1}=\frac{N_1q^{N_1-1}-(N_1-1)q^{N_1}}{(1-q)^2}< \frac{N_1q^{N_1-1}}{(1-q)^2}.  
\end{align}
thus, overall, 
\begin{align}
 |\E(C_{N_X}) -\E(C_{N_X}^\infty)|\leq \frac{ 100(c\log n+2)}{n^c(1-0.5n^{-c})^2}.
\end{align}


\end{document}